\documentclass[a4paper,runningheads]{llncs}

\RequirePackage[latin1]{inputenc}
\usepackage[english]{babel}

\usepackage{times}
\usepackage{helvet}
\usepackage{courier}
\usepackage{graphicx}
\usepackage{latexsym}
\usepackage{amssymb}
\usepackage{amsmath}
\usepackage{enumitem}
\usepackage{framed}
\usepackage[colorinlistoftodos, textwidth=3cm]{todonotes} 
\usepackage[spaces,hyphens]{url}
\usepackage{xspace}
\usepackage{array}
\usepackage{listings}
\usepackage{hyperref}




\title{A Note on Reasoning on $\DLliteR$ with Defeasibility}
\subtitle{Technical Report}

\author{
	Loris Bozzato\inst{1} \and Thomas Eiter\inst{2} \and Luciano Serafini\inst{1}
}

\institute{
  Fondazione Bruno Kessler, 
  Via Sommarive 18, 38123 Trento, Italy \\
  \and
  Institute of Logic and Computation, Technische Universit\"{a}t Wien,\\
  Favoritenstra\ss e 9-11, A-1040 Vienna, Austria\\
  \medskip
 \email{\texttt{\{bozzato,serafini\}@fbk.eu}, eiter@kr.tuwien.ac.at}
}




\def\isa{\sqsubseteq}

\def\I{\mathcal{I}}

\newcommand{\Pair}[2]{\left\langle#1,#2\right\rangle}

\newcommand{\vc}[1]{\mathbf{#1}}

\newcommand{\ee}{{\vc{e}}}







\newcommand{\NI}{\mathrm{NI}}
\newcommand{\NR}{\mathrm{NR}}
\newcommand{\NC}{\mathrm{NC}}

\newcommand{\IC}{\mathfrak{I}}

\newcommand{\K}{\mathcal{K}}

\newcommand{\dlmodels}{\mathop{\models_{\mathrm{DL}}}}
\newcommand{\T}{\mathcal{T}}



\newcommand{\stru}[1]{\langle #1 \rangle}

\newcommand{\non}{\neg}

\newcommand{\subs}{\sqsubseteq}

\newcommand{\Acal}{{\cal A}}

\newcommand{\Ecal}{{\cal E}}

\newcommand{\Kcal}{{\cal K}}
\newcommand{\Ical}{{\cal I}}
\newcommand{\Lcal}{{\cal L}}

\newcommand{\Rcal}{{\cal R}}
\newcommand{\Tcal}{{\cal T}}

\newcommand{\cov}[1]{\preceq}


\newcommand{\CAS}{\mathit{CAS}}

\newcommand{\OVR}{\mathit{OVR}}
\newcommand{\casmap}{\chi}
\newcommand{\mi}[1]{\mathit{#1}}

\newcommand{\nop}[1]{}


\newcommand{\SROIQrl}{\mathcal{SROIQ}\text{-RL}}

\newcommand{\NCs}{\NC_\Sigma}

\newcommand{\NIs}{\NI_\Sigma}

\newcommand{\NRs}{\NR_\Sigma}




\newcommand{\DC}{\mathbf{C}}
\newcommand{\DP}{\mathbf{P}}
\newcommand{\DV}{\mathbf{V}}



\newcommand{\subClass}{{\tt subClass}}

\newcommand{\subEx}{{\tt subEx}}
\newcommand{\subRole}{{\tt subRole}}

\newcommand{\supEx}{{\tt supEx}}

\newcommand{\pDis}{{\tt dis}}
\newcommand{\pInv}{{\tt inv}}
\newcommand{\pIrr}{{\tt irr}}

\newcommand{\nom}{{\tt nom}}
\newcommand{\cls}{{\tt cls}}
\newcommand{\rol}{{\tt rol}}
\newcommand{\const}{{\tt const}}

\newcommand{\triple}{{\tt triple}}

\newcommand{\supNot}{{\tt supNot}}



\newcommand{\default}{{\mathrm D}}

\newcommand{\ovr}{{\tt ovr}}
\newcommand{\naf}{\mathop{\tt not}}
\newcommand{\grd}{\mathit{grnd}}
\newcommand{\Head}{\mathit{Head}}
\newcommand{\Body}{\mathit{Body}}

\newcommand{\rif}{\leftarrow}


\newcommand{\insta}{{\tt insta}}
\newcommand{\instd}{{\tt instd}}
\newcommand{\triplea}{{\tt triplea}}
\newcommand{\tripled}{{\tt tripled}}

\newcommand{\ninsta}{{\tt ninsta}}
\newcommand{\ntriplea}{{\tt ntriplea}}

\newcommand{\AllNRel}{{\tt all\_nrel}}
\newcommand{\AllNRelStep}{{\tt all\_nrel\_step}}
\newcommand{\first}{{\tt first}}
\newcommand{\nextp}{{\tt next}}
\newcommand{\lastp}{{\tt last}}

\newcommand{\definst}{{\tt def\_insta}}
\newcommand{\deftriple}{{\tt def\_triplea}}
\newcommand{\defninst}{{\tt def\_ninsta}}
\newcommand{\defntriple}{{\tt def\_ntriplea}}
\newcommand{\defsubs}{{\tt def\_subclass}}

\newcommand{\defsubex}{{\tt def\_subex}}
\newcommand{\defsupex}{{\tt def\_supex}}

\newcommand{\defsubr}{{\tt def\_subr}}

\newcommand{\defdis}{{\tt def\_dis}}
\newcommand{\definv}{{\tt def\_inv}}
\newcommand{\defirr}{{\tt def\_irr}}

\newcommand{\defsupnot}{{\tt def\_supnot}}

\newcommand{\DKB}{\mathcal{K}}

%
%

\newcommand{\conp}{\ensuremath{\mathrm{co\mbox{-}NP}}}
\newcommand{\nlogspace}{\ensuremath{\mathrm{NLogSpace}}}

\def\qed{$\Box$}
\def\endproof{\ifhmode\nobreak\qed\par\fi\medskip}

\setitemize{label=--,leftmargin=*}
\setenumerate{leftmargin=*}
\newcommand{\EndEx}{\mbox{}~\hfill$\Diamond$} 

\newcommand{\comment}[1]{{#1}}

\newcommand{\lbnote}[1]{}
\newcommand{\lbreview}[1]{}




\newcommand{\DLliteR}{\textsl{DL-Lite}_\Rcal}
\newcommand{\DLlite}{\textsl{DL-Lite}}

\begin{document}

\maketitle
\vspace*{-1\baselineskip}

\begin{abstract}
  Representation of defeasible information
  is of interest in description logics, 
  as it is related to the need of accommodating exceptional instances in knowledge bases.
  %
  In this direction, in our previous works we presented
  a datalog translation for reasoning on
  (contextualized) OWL RL knowledge bases with a notion of
  justified exceptions on defeasible axioms.
  While it
  covers a relevant fragment of OWL,
  the resulting reasoning process 
  needs a complex encoding 
  in order to capture reasoning on negative information.
	%
  In this paper, 
  we consider the case of knowledge bases in $\DLliteR$, i.e. the language underlying OWL QL.
	We provide a definition for $\DLliteR$ knowledge bases with defeasible axioms and
	study their properties. 
	The limited form of $\DLliteR$ axioms allows us to formulate a simpler 
	encoding into datalog (under answer set semantics)
	with direct rules for reasoning on negative information.
	The resulting materialization method gives rise to a complete
	reasoning procedure for instance checking in $\DLliteR$ 
	with defeasible axioms.
	
\end{abstract}


\section{Introduction}
\label{sec:intro}

Representing defeasible information
is a topic of interest in the area of description logics (DLs), 
as it is related to the need of accommodating the presence of 
exceptional instances in knowledge bases.
This interest led to different proposals 
for 
non-monotonic features in DLs
based on different 
notions of defeasibility, e.g.~\cite{BonattiFPS:15,BonattiLW:06,DBLP:conf/jelia/BritzV16,GiordanoGOP:13}.
%
In this direction, we presented in~\cite{BozzatoES:18} 
an approach to represent
defeasible information in
contextualized DL knowledge bases
by introducing a notion of \emph{justifiable exceptions}:
general \emph{defeasible axioms} can be overridden
by more specific exceptional instances if 
their application would provably lead to inconsistency.
Reasoning in $\SROIQrl$ (i.e. OWL RL) knowledge bases 
is realized by
a translation to datalog, 
which provides a complete \emph{materialization calculus}~\cite{Krotzsch:10} 
for instance checking and conjunctive query (CQ) answering.
%
While the translation 
covers the full $\SROIQrl$ language,
it needs a complex encoding 
to represent reasoning on exceptions.
%
In particular, 
it relies on the use of proofs by contradiction 
to ensure completeness in presence of negative disjunctive information.




In this paper, we consider the case of knowledge bases with defeasible axioms
in $\DLliteR$~\cite{CalvaneseGLLR07}, which 
corresponds to the language underlying the OWL QL fragment~\cite{Motik:09:OWO}.
It is indeed interesting to show the applicability of our 
defeasible reasoning approach to the well-known $\DLlite$ family: 
in particular, by adopting $\DLliteR$ as the base logic
we need to take unnamed individuals introduced by existential formulas into account, 
especially for the justifications of exceptions.
Moreover, we show that
due to the restricted form of its axioms,
the $\DLliteR$ language allows us to give a less involved datalog encoding 
in which reasoning on negative information 
is directly encoded 
in datalog rules (cf.\ discussion on ``justification safeness'' in~\cite{BozzatoES:18}).

\smallskip\noindent
The contributions of this paper can be summarized as follows:
\begin{itemize}
\item
  In Section~\ref{sec:dkb} we provide a definition of defeasible DL
  knowledge base (DKB) with justified models
  that draws from the definition of 
	\emph{Contextualized Knowledge Repositories (CKR)}
	\cite{BozHomSer:DL2012,BozzatoSerafini:13,serafini-homola-ckr-jws-2012}
	with defeasible axioms provided in~\cite{BozzatoES:18}. 
	This allows us to concentrate on the defeasible reasoning aspects 
	without considering the aspects related to the representation of context
	in the CKR framework.
\item 
For DKBs based on $\DLliteR$,
we  provide in Section~\ref{sec:translation} a 
translation to datalog (under answer set semantics~\cite{gelf-lifs-91})
that alters the CKR translation in~\cite{BozzatoES:14,BozzatoES:18} 
 and prove its correctness
with respect to instance checking.
	In particular, the fact that reasoning on 
	negative disjunctive information is not needed
	allow us to provide a simpler 
	translation (without the use of the involving ``test'' environments 
	mechanism of~\cite{BozzatoES:18}).
\item
	In Section~\ref{sec:complexity}  
	we provide complexity results for 
	reasoning problems on $\DLliteR$-based DKBs.
        Deciding satisfiability of such a DKB with respect to justified models is
        tractable, while inference of an axiom under cautious (i.e.,
        certainty) semantics is \conp-complete in general.
\end{itemize}


\section{Preliminaries}
\label{sec:prelims}
\noindent
\textbf{Description Logics and $\DLliteR$ language.}
We assume the common definitions of description logics~\cite{dlhb}
and the definition of the logic $\DLliteR$~\cite{CalvaneseGLLR07}:
we summarize in the following the basic definitions used in this work.

A \emph{DL vocabulary} 
$\Sigma$ consists of the mutually disjoint countably infinite
sets $\NC$ of \emph{atomic concepts},
$\NR$ of \emph{atomic roles}, and 
$\NI$ of \emph{individual constants}.
%
Complex \emph{concepts} are then recursively defined as the smallest
sets containing all concepts that can be inductively constructed using
the constructors of the considered DL language.
%
A $\DLliteR$ \emph{knowledge base} $\K=\stru{\T,\Rcal,\Acal}$ consists of: 
a TBox $\T$ containing \emph{general concept inclusion (GCI)} axioms $C \subs D$ 
where $C,D$ are concepts, of the form:
\begin{align}
  C & := A \;|\; 
         \exists R
\\
  D & := A \;|\; 
         \non C \;|\;
         \exists R
     \end{align}
where $A \in \NC$ and $R \in \NR$;
an RBox $\Rcal$ containing \emph{role inclusion (RIA)} axioms $S \subs R$, 
reflexivity, irreflexivity, inverse and
role disjointness axioms, where $S,R$ are roles; 
and an ABox $\Acal$ composed of assertions of the forms 
$D(a)$, where $D$ is a right-side concept, $R(a,b)$, 
with $R \in \NR$ and $a,b \in \NI$.

A \emph{DL interpretation} is a pair $\I=\stru{\Delta^\I,\cdot^\I}$ where $\Delta^\I$
is a non-empty set called \emph{domain} and $\cdot^\I$ is the \emph{interpretation
function} which assigns denotations for language elements:
$a^\I \in \Delta^\I$, for $a \in \NI$;
$A^\I \subseteq \Delta^\I$, for $A \in \NC$; 
$R^\I \subseteq \Delta^\I\times\Delta^\I$, for $R \in \NR$. 
The interpretation of non-atomic concepts and roles is defined by the evaluation 
of their description logic operators (see~\cite{CalvaneseGLLR07} for $\DLliteR$).
%
An interpretation $\I$ \emph{satisfies} an axiom 
$\phi$, denoted
$\I\dlmodels\phi$, if it verifies the respective semantic condition, in particular: 
for $\phi = D(a)$, $a^\I \in D^\I$;
for $\phi = R(a,b)$, $\stru{a^\I,b^\I} \in R^\I$;
for $\phi = C \subs D$, $C^\I \subseteq D^\I$ (resp. for RIAs).
$\I$ is a \emph{model} of $\K$, denoted
$\I\dlmodels\K$, if it satisfies all axioms of $\K$.

Without loss of generality, we adopt the
{\em standard name assumption (SNA)} in the DL context 
(see~\cite{DBLP:journals/ai/EiterILST08,DBLP:journals/jacm/MotikR10} for more details).
That is, we assume an infinite subset
$\NI_S \subseteq\NI$ of individual constants, called {\em standard
  names} s.t. in every interpretation $\I$ we have (i)
 $\Delta^\I = \NI_S^\I = \{ c^\I \mid c \in \NI_S\}$; (ii) $c^\I
 \neq d^\I$, for every distinct $c,d \in \NI_S$. Thus, we may assume
 that $\Delta^I= \NI_S$ and $c^\I=c$ for each $c\in \NI_S$. 
  The \emph{unique name assumption (UNA)} 
	corresponds to assuming
  $c\neq d$ for all constants in $\NI\setminus \NI_S$ resp.\ occurring in 
  the knowledge base.
	
We confine here to 
knowledge bases without reflexivity axioms. The reason 
is that reflexivity
allows one to derive positive properties for any (named and unnamed) individual;
this complicates the treatment of defeasible axioms (cf.\ Discussion section).

\smallskip\noindent
\textbf{Datalog Programs and Answer Sets.}
We express our rules in
\emph{datalog with negation} 
under
answer sets semantics. 
In fact, we use here two kinds of 
negation\footnote{Strong negation can be easily emulated using weak
negation. While it does not yield higher expressiveness, it is
more convenient for presentation.}: 
strong (``classical'') negation $\non$ and weak \emph{(default) negation}\/ $\naf$
under the interpretation of answer sets semantics~\cite{gelf-lifs-91};
the latter is in particular needed for representing defeasibility.

A \emph{signature} is a tuple $\Pair{\DC}{\DP}$ of  a finite set $\DC$  of \emph{constants}
and a finite set $\DP$  of \emph{predicates}.
We assume a set $\DV$ of \emph{variables}; the elements of $\DC \cup
\DV$ are \emph{terms}.
%
An \emph{atom}
is of the form $p(t_1, \ldots, t_n)$
where $p \in \DP$ and $t_1$, \ldots, $t_n$, are terms.
A \emph{literal} $l$ is either a \emph{positive literal} $p$ or a 
\emph{negative literal} $\non p$, where  $p$ is an atom and $\non$ 
is 
strong negation. Literals of the form $p$, $\non p$ are \emph{complementary}.
We denote with $\neg. l$ the opposite of literal
$l$, i.e., $\neg.p = \non p$ and $\neg.\non p = p$ for an atom $p$.
A (datalog) rule $r$ is an expression: 
\begin{equation}
\label{rule}
a \leftarrow b_1, \dots, b_k, \naf b_{k+1}, \dots, \naf b_{m}.
\end{equation}
where $a, b_{1}, \dots, b_{m}$ are literals and $\naf$ is 
negation as failure (NAF).
We denote with $\Head(r)$ the head $a$ of rule $r$ and with
$\Body(r) = \{b_1, \dots, b_k,\naf b_{k+1}, \dots,$ $\naf b_{m}\}$ the body of $r$, respectively.
A (datalog) \emph{program} $P$ is a finite set of rules.
%
An atom (rule etc.) is \emph{ground}, if 
no variables occur in it. A \emph{ground substitution} $\sigma$ for $\Pair{\DC}{\DP}$
is any function $\sigma \,{:}\, \DV \to \DC$;
the \emph{ground instance} of an atom (rule, etc.) $\chi$ from
$\sigma$, denoted $\chi\sigma$, is obtained by replacing in $\chi$
each occurrence of variable $v \in \DV$ with $\sigma(v)$.
A \emph{fact} $H$ is a ground rule $r$ with empty body.
The \emph{grounding}\/ of a rule $r$, $\grd(r)$, is the set of all
ground instances of $r$, and the \emph{grounding}\/ of a program $P$
is $\grd(P) = \bigcup_{r\in P} \grd(r)$.

Given a program $P$, the \emph{(Herbrand) universe} $U_P$ of $P$ is the set of all
constants occurring in $P$ and the \emph{(Herbrand) base}
$B_P$ of $P$ is the set of all the 
ground literals 
constructable from the predicates in $P$ and the 
constants in $U_P$.
An \emph{interpretation} $I \subseteq B_P$ is any 
satisfiable subset of
$B_P$ (i.e., not containing complementary literals); 
a literal $l$ is \emph{true} in $I$, denoted $I\models l$,
if $l \in I$, and $l$ is \emph{false} in $I$ if $\neg.l$ is true.
%
Given a rule $r \in \grd(P)$,
we say that $\Body(r)$ is true in $I$, denoted $I\models \Body(r)$, if (i) $I\models b$ for each literal 
$b \in \Body(r)$ 
and (ii) $I\not\models b$ for each literal $\naf b\in \Body(r)$.
A rule r is \emph{satisfied} in $I$, denoted $I\models r$, if either 
$I\models \Head(r)$ or $I\not\models \Body(r)$.
An interpretation $I$ 
is a {\em model}\/ of $P$, denoted $I \models P$,
if $I\models r$ for each $r\in \grd(P)$;
moreover, $I$ is 
\emph{minimal}, 
if $I'\not\models P$ for each subset $I'\subset I$.

Given an interpretation $I$ for $P$, the (Gelfond-Lifschitz)
\emph{reduct} of $P$ w.r.t. $I$, denoted by $G_I (P)$, is the set of rules obtained from $\grd(P)$ by 
  (i) removing every rule $r$ such that 
 $I\models l$ for some $\naf l\in\Body(r)$; and
  (ii) removing the NAF part from the bodies of the remaining rules.
Then $I$ is an \emph{answer set} of $P$, if $I$ is a minimal
model of $G_I(P)$; the minimal model is
unique and exists iff $G_I(P)$ has some model. 
Moreover, if $M$ is an answer set for $P$, then $M$ is a minimal model of $P$.
We say that a literal $a \in B_P$ is a \emph{consequence} of $P$ and write
$P \models a$ if every answer set $M$ of $P$ fulfills $M \models a$.


\section{DL Knowledge Base with Justifiable Exceptions}
\label{sec:dkb}


In this paper we concentrate on reasoning on a DL knowledge base enriched
with \emph{defeasible axioms}, whose syntax and interpretation are 
analogous to~\cite{BozzatoES:18}. With respect to the contextual framework
presented in~\cite{BozzatoES:18}, this corresponds to reasoning
inside a single local context: while this simplifies 
presentation of the defeasibility aspects and the resulting reasoning method
for the case of $\DLliteR$, it can be generalized to the original case of multiple
local contexts.

\smallskip\noindent
\textbf{Syntax.}
%
Given a DL language $\Lcal_\Sigma$ based on a DL vocabulary 
$\Sigma = \NC_\Sigma \cup \NR_\Sigma \cup \NI_\Sigma$,
a \emph{defeasible axiom} is any expression of the form
$\default(\alpha)$, where $\alpha \in \Lcal_\Sigma$.

We denote with $\Lcal_\Sigma^\default$ the DL language extending
$\Lcal_\Sigma$ with the set of defeasible axioms in $\Lcal_\Sigma$.
On the base of such language,
we provide our definition of knowledge base with defeasible axioms.

\begin{definition}[defeasible knowledge base, DKB]
A \emph{defeasible knowledge base (DKB)} $\Kcal$ on
a vocabulary $\Sigma$ is a DL knowledge base over
$\Lcal^\default_\Sigma$.
\end{definition}
In the following, we tacitly consider DKBs based on $\DLliteR$.

\begin{example}
\label{ex:syntax}
We introduce a simple example showing the definition 
and interpretation of a defeasible existential axiom.
In the organization of a university research department, we want to specify that ``in general''
department members need also to teach at least a course. 
On the other hand, PhD students, while recognized as department members,
are not allowed to hold a course.
We can represent this scenario as a DKB $\K_{dept}$ where:

\begin{center}\footnotesize
  $\begin{array}{rl}
    \K_{dept}: & \left\{\begin{array}{l}
		           \default(\mi{DeptMember} \subs \exists \mi{hasCourse}),
							 \mi{Professor} \subs \mi{DeptMember},\\
							 \mi{PhDStudent} \subs \mi{DeptMember}, 
							 \mi{PhDStudent} \subs \non \exists \mi{hasCourse},\\
		           \mi{Professor}(\mi{alice}),\, \mi{PhDStudent}(\mi{bob})  
							\end{array}\right\}
  \end{array}$	
\end{center}
Intuitively, 
we want to override the
fact that there exists some course assigned to the PhD student $\mi{bob}$.
On the other hand, for the individual $\mi{alice}$ no overriding should happen
and the defeasible axiom can be applied.
\EndEx
\end{example}


\smallskip\noindent
\textbf{Semantics.}
We can now define a model based interpretation of DKBs,
in particular by providing a semantic characterization to 
defeasible axioms.

Similarly to the case of $\SROIQrl$ in~\cite{BozzatoES:18}, 
we can express $\DLliteR$ knowledge bases in first-order (FO)
logic, where every axiom $\alpha \in \Lcal_\Sigma$
is translated into an equivalent FO-sentence
$\forall\vec{x}.\phi_\alpha(\vec{x})$ where
$\vec{x}$ contains all free variables of $\phi_\alpha$ depending on
the type of the axiom.
The translation, depending on the axiom types, can be defined analogously to the
FO-translation presented in~\cite{BozzatoES:18}.
In the case of existential axioms of the kind
$\alpha = A \isa \exists R$, the FO-translation
$\phi_\alpha(\vec{x})$ is defined as:

\smallskip

\centerline{$A(x_1) \rightarrow R(x_1, f_\alpha(x_1))$\,;}

\smallskip

\noindent that is, we introduce a Skolem function $f_\alpha(x_1)$
which represents new ``existential'' individuals.
Formally, for every right existential axiom $\alpha \in \Lcal_\Sigma$,
we define a Skolem function 
$f_\alpha: \NI \mapsto \Ecal$
where $\Ecal$ is a set of new individual constants not appearing in $\NI$.
In particular, for a set of individual names $N \subseteq \NI$, we will write $sk(N)$ to denote the 
extension of $N$ with the set of Skolem constants for elements in $N$.

After this transformation
the resulting formulas $\phi_\alpha(\vec{x})$
amount semantically to Horn formulas, since
left-side concepts $C$ can be expressed by an existential
positive FO-formula, and right-side concepts $D$ by a conjunction
of Horn clauses. The following property from~\cite[Section 3.2]{BozzatoES:18} 
is then preserved for $\DLliteR$ knowledge bases.

\begin{lemma}
\label{lem:horn-equiv}
For a DL knowledge base $\K$ on $\Lcal_\Sigma$,
its FO-translation 
$\phi_\K \,{:=}\,\bigwedge_{\alpha \in \K}\!\!
  \forall\vec{x}\phi_\alpha(\vec{x})$
is semantically equivalent to a conjunction of universal Horn clauses.
\end{lemma}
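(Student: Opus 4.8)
The plan is to proceed by a case analysis on the syntactic form of the axioms that may occur in a $\DLliteR$ knowledge base $\K$, showing that each $\phi_\alpha(\vec{x})$ — after Skolemization of right existentials — is (equivalent to) a conjunction of universal Horn clauses; then the claim follows since a conjunction of conjunctions of universal Horn clauses is again such a conjunction. First I would recall that a universal Horn clause is a universally quantified disjunction of literals with at most one positive literal, equivalently an implication $L_1 \wedge \dots \wedge L_n \rightarrow L_0$ where $L_0$ is an atom or $\bot$ and the $L_i$ are atoms. The key observation, already hinted at in the excerpt, is that in $\DLliteR$ the left-hand side $C$ of a GCI is either an atomic concept $A$ or $\exists R$, and the right-hand side $D$ is $A$, $\neg C$, or $\exists R$; crucially, these are all "existential positive" on the left and "Horn" on the right once the existential on the right is Skolemized.

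The key steps, in order: (1) GCI axioms $C \sqsubseteq D$. Translate $C(x)$: if $C = A$ this is $A(x)$; if $C = \exists R$ this is $\exists y.\, R(x,y)$. Translate $D$: if $D = A$, the axiom becomes $C(x) \rightarrow A(x)$; if $D = \neg C'$ with $C' = A'$ it becomes $C(x) \wedge A'(x) \rightarrow \bot$, and if $C' = \exists S$ it becomes $C(x) \wedge S(x,z) \rightarrow \bot$; if $D = \exists R$ with Skolem function $f_\alpha$, it becomes $C(x) \rightarrow R(x, f_\alpha(x))$. In every case, after pulling the existential quantifier on the left out as a universally quantified premise variable (standard: $(\exists y\, \psi(x,y)) \rightarrow \chi(x)$ is equivalent to $\forall y.\, \psi(x,y) \rightarrow \chi(x)$), we obtain a clause with a conjunction of atoms in the body and either a single atom or $\bot$ in the head — i.e., a universal Horn clause. (2) ABox assertions: $D(a)$ with $D$ a right-side concept is either $A(a)$ (a unit Horn clause / fact), $\neg A(a)$ i.e. $A(a)\rightarrow\bot$, or $(\exists R)(a)$ which Skolemizes to $R(a, f_\alpha(a))$, a fact; $R(a,b)$ is a fact. (3) RBox axioms: a RIA $S \sqsubseteq R$ is $S(x,y)\rightarrow R(x,y)$; inverse axioms $R \equiv S^-$ give the pair $S(x,y)\rightarrow R(y,x)$ and $R(x,y)\rightarrow S(y,x)$; role disjointness $\mathrm{Dis}(R,S)$ is $R(x,y)\wedge S(x,y)\rightarrow\bot$; irreflexivity is $R(x,x)\rightarrow\bot$ — all universal Horn. (Reflexivity, which would give the non-Horn-friendly fact $R(x,x)$ as a universally true atom — actually still Horn, but the paper excludes it for other reasons, so I need not treat it.) (4) Conclude: $\phi_\K$ is the conjunction over all $\alpha\in\K$ of these, hence a conjunction of universal Horn clauses.

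The routine part is the exhaustive case check; the only genuinely substantive point — and the one I would state carefully — is step (1)'s handling of $\exists R$ on the \emph{left} of a GCI (and inside assertions via the domain of a role): moving from $\exists y.\,R(x,y) \rightarrow \chi(x)$ to the universal Horn form $\forall x\forall y.\, R(x,y)\rightarrow\chi(x)$ is a logical equivalence, not merely equisatisfiability, so semantic equivalence is genuinely preserved there. The one place where I would be slightly careful about the word "equivalent" is the Skolemization of right-hand existentials ($A \sqsubseteq \exists R$, $(\exists R)(a)$): Skolemization preserves satisfiability and first-order consequences over the original signature but is only equivalent in the sense used in the cited~\cite[Section 3.2]{BozzatoES:18} — i.e., models of the Skolemized theory are exactly the models of the original theory expanded with a suitable interpretation of the fresh function symbols. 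I would simply invoke that the statement is to be read in this established sense (as the excerpt already sets up the Skolem functions $f_\alpha$ for this purpose), so no new obstacle arises here. Overall there is no hard obstacle: the lemma is a structural observation about the restricted shape of $\DLliteR$, and the proof is a finite enumeration of axiom types with the one conceptual note about left-side existentials.
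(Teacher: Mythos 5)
Your proposal is correct and follows essentially the same route as the paper, which justifies the lemma by the structural observation (carried over from~\cite[Section 3.2]{BozzatoES:18}) that left-side concepts translate to existential positive formulas and right-side concepts, after introducing the Skolem terms $f_\alpha(x)$, to conjunctions of Horn clauses; your case analysis simply makes this explicit. Note also that your Skolemization caveat is not really needed, since the paper's $\phi_\alpha$ for $A \isa \exists R$ is \emph{defined} with the Skolem term, so $A(x_1) \rightarrow R(x_1, f_\alpha(x_1))$ is already literally a Horn clause.
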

With these considerations on the definition of FO-translation, 
we can now provide our definition of axiom instantiation: 

\begin{definition}[axiom instantiation]
Given an axiom $\alpha \in \Lcal_\Sigma$ with FO-translation
$\forall\vec{x}.\phi_\alpha(\vec{x})$, the instantiation  of $\alpha$
with a
tuple $\ee$ of individuals in $\NIs$, 
written $\alpha(\ee)$, is the
specialization of $\alpha$ to $\ee$, i.e., $\phi_\alpha(\ee)$,
depending on the type of $\alpha$.
\end{definition}
Note that, since we are assuming standard names, this basically means that
we can express instantiations (and exceptions) to any element of the domain (identified by a standard name in $\NI_\Sigma$).
We next introduce clashing assumptions and clashing sets.

\begin{definition}[clashing assumptions and sets]
A \emph{clashing assumption} is a pair $\stru{\alpha, \ee}$
such that $\alpha(\ee)$ is an axiom instantiation 
for an axiom $\alpha \in  \Lcal_\Sigma$. 
A \emph{clashing set} for a clashing assumption $\stru{\alpha,\ee}$
is a satisfiable set $S$ that consists of ABox assertions over
$\Lcal_\Sigma$ and negated ABox assertions of the forms $\neg
C(a)$ and $\neg R(a,b)$
such that
$S \cup \{\alpha(\ee)\}$ is unsatisfiable.
\end{definition}
A clashing assumption $\stru{\alpha, \ee}$
represents
that $\alpha(\ee)$ is not satisfiable, 
while a clashing set $S$ provides an assertional ``justification'' for the assumption of 
local overriding of $\alpha$ on~$\ee$.
We can then extend the notion of DL interpretation 
with a set of clashing assumptions. 

\begin{definition}[CAS-interpretation]
A \emph{CAS-interpretation} is a structure $\I_{\CAS} = \stru{\I, \casmap}$
where $\I = \stru{\Delta^\I, \cdot^\I}$ is a DL interpretation for $\Sigma$ and 
$\casmap$ is a set of clashing assumptions.
\end{definition}
By extending the notion of satisfaction with respect to CAS-interpretations,
we can disregard the application of defeasible axioms to the
exceptional elements in the sets of clashing assumptions. 
For convenience, we 
call two DL interpretations $\I_1$ and $\I_2$
\emph{$\NI$-congruent}, if
$c^{\I_1} = c^{\I_2}$ 
holds for every $c\in \NI$.

\begin{definition}[CAS-model]
\label{def:cas-model}
Given a DKB $\Kcal$, 
a CAS-interpretation $\I_{\CAS} = \stru{\I, \casmap}$
is a CAS-model for $\Kcal$ (denoted $\I_{\CAS} \models
  \Kcal$), if the following holds:
\begin{enumerate}[label=(\roman*)]
  \item
   for every $\alpha \in \Lcal_\Sigma$ in $\Kcal$, $\I \models \alpha$;
  \item
   for every  $\default(\alpha) \in \Kcal$ (where $\alpha \in \Lcal_\Sigma$),
   with $|\vec{x}|$-tuple $\vec{d}$ of elements 
   in $\NIs$ such that $\vec{d} \notin \{ \ee \mid \stru{\alpha,\ee} \in \casmap \}$, 
   we have $\I \models \phi_\alpha(\vec{d})$.
\end{enumerate}
\end{definition}
%
We say that a clashing assumption $\stru{\alpha, \ee} \in \casmap$ is
\emph{justified} for a $\CAS$ model $\I_{\CAS} = \stru{\I, \casmap}$,   
if some clashing set
$S = S_{\stru{\alpha,\ee}}$  exists such that, for every CAS-model
$\I_{\CAS}' = \stru{\I', \casmap}$ of $\Kcal$ 
that is $\NI$-congruent with $\I_{\CAS}$, 
it holds that $\I' \models S_{\stru{\alpha,\ee}}$.
We then consider as DKB models
only the CAS-models
where all clashing assumptions are justified.

\begin{definition}[justified CAS model and DKB model]
A $\CAS$ model $\I_{\CAS} = \stru{\I, \casmap}$ of 
a DKB $\Kcal$ is \emph{justified}, if every $\stru{\alpha, \ee} \in
\casmap$ is justified.
An interpretation $\I$ 
is a \emph{DKB model} of $\K$ (in
symbols, $\I\models\K$), if $\K$ has some  
justified $\CAS$ model $\I_{\CAS} = \stru{\I, \casmap}$.
\end{definition}

\begin{example}
	Reconsidering $\K_{dept}$ in Example~\ref{ex:syntax},
	a CAS-model providing the intended interpretation of 
	defeasible axioms is $\I_{\CAS_\mi{dept}} = \stru{\Ical, \chi_\mi{dept}}$ where
	  $\chi_\mi{dept} = \{\stru{\alpha, \mi{bob}}\}$
	with $\alpha = \mi{DeptMember} \subs \exists \mi{hasCourse}$.
	The fact that this model is justified is verifiable considering that for
	the clashing set $S = \{\mi{DeptMember}(\mi{bob}),$ $ \non \exists \mi{hasCourse}(\mi{bob})\}$
	we have $\I \models S$.
	On the other hand, note that a similar clashing assumption for $\mi{alice}$ is not justifiable:
	it is not possible from the contents of $\K_{dept}$ to derive a clashing set $S'$
	such that $S' \cup \{\alpha(\mi{alice})\}$ is unsatisfiable.
	By Definition~\ref{def:cas-model}, 
	this allows to apply $\alpha$ to this individual as expected and thus
	$\I \models \exists \mi{hasCourse}(\mi{alice})$.
	\EndEx
\end{example}
%
DKB-models have interesting properties similar as CKR-models in
\cite{BozzatoES:18}. In particular, we mention here that
for DKB-model $\I_{\CAS} = \stru{\I, \casmap}$, each clashing
assumption $\stru{\alpha,\ee}\in \casmap$  is over individuals of the knowledge
base, cf.\ \cite[Prop.~5, context focus]{BozzatoES:18}; this is because in absence of reflexivity, no positive properties 
(which occur in all clashing sets), can be proven for other
elements. Furthermore, 
the clashing assumptions are non-redundant, i.e., no NI-congruent DKB-model 
$\I'_{\CAS} = \stru{\I', \casmap'}$ exists such that $\casmap' \subset
\casmap$, cf.\ \cite[Prop.~6, minimality of justification]{BozzatoES:18}.


\section{Datalog Translation for $\DLliteR$ DKB}
\label{sec:translation}

We present a datalog translation for reasoning on $\DLliteR$
DKBs which refines the 
translation provided in~\cite{BozzatoES:18}.
The translation provides a reasoning method 
for positive instance queries w.r.t. entailment.
%
An important aspect of this translation is that, due to the form of $\DLliteR$ axioms,
no inference on disjunctive negative information 
is needed for the reasoning on derivations of clashing sets. 
Thus, differently from~\cite{BozzatoES:18}, 
reasoning by contradiction using 
``test environments'' is not needed and we can directly encode 
negative reasoning as rules on negative literals: with respect to the discussion in~\cite{BozzatoES:18},
we can say that $\DLliteR$ thus represents an inherently ``justification safe'' fragment
which then allows us to formulate such a direct datalog encoding.
With respect to the interpretation of right-hand side existential axioms, we follow the 
approach of~\cite{Krotzsch:10}: for every axiom
of the kind $\alpha = A \isa \exists R$, an auxiliary abstract individual $aux^\alpha$
is added in the translation to represent the class of all
$R$-successors introduced by $\alpha$.

%
We introduce a \emph{normal form} for axioms of $\DLliteR$
which allows us to simplify the formulation of reasoning rules:
the normal form axioms of $\DLliteR$ that we consider are shown in Table~\ref{tab:dlr-normalform}.
We can provide 
rules to transform any
$\DLliteR$ DKB into normal form and show that the rewritten DKB is equivalent
to the original.

\begin{table}[htp]%
\caption{Normal form for $\DKB$ axioms from $\Lcal_\Sigma$}
\label{tab:dlr-normalform}


\centerline{\small
$\begin{array}{c}
\hline\\[-1.75ex]
\multicolumn{1}{l}{\text{for $A,B,C \in \NCs$, $R \in \NRs$, $a,b \in \NIs$:}}\\[1ex]
\begin{array}{l}
A(a) \qquad  R(a,b) \qquad \non A(a) \qquad \non R(a,b) \qquad\	
A \subs B \qquad  A \subs \non C\\[1ex] 
  \exists R  \subs B   \qquad  A \subs \exists R \qquad\
  R  \subs T \qquad \mathrm{Dis}(R,S) \qquad \mathrm{Inv}(R,S) \qquad \mathrm{Irr}(R)\\ 
\end{array} 
\\[3ex] 
\hline
\end{array}$}
\end{table}


\smallskip\noindent
\textbf{Translation rules overview.}
We can now present the components of our datalog translation for 
$\DLliteR$ based DKBs.
%
As in the original formulation in~\cite{BozzatoES:14,BozzatoES:18},
which extended the encoding without defeasibility proposed in~\cite{BozzatoSerafini:13}
(inspired by the materialization calculus in~\cite{Krotzsch:10}),
the translation includes sets of \emph{input rules} (which encode DL axioms and signature in datalog),
\emph{deduction rules} (datalog rules providing instance level inference) and \emph{output rules}
(that encode in terms of a datalog fact the ABox assertion to be proved).
The translation is composed by the following sets of rules: 

\smallskip\noindent
\emph{$\DLliteR$ input and output rules:}
rules in $I_{dlr}$ encode as datalog facts the $\DLliteR$ axioms
and signature of the input DKB.
For example, in the case of existential axioms, 
these are translated as 
$A \subs \exists R \mapsto \{\supEx(A,R,aux^{\alpha})\}$:
note that this rule, in the spirit of~\cite{Krotzsch:10}, introduces an auxiliary element
$aux^\alpha$, which intuitively represents the class of
all new $R$-successors generated by the axiom $\alpha$.
Similarly, output rules in $O$ encode in datalog the
ABox assertions to be proved.
These rules are provided in Tables~\ref{tab:dlr-rules-tgl} and~\ref{tab:output-rules-tgl}.

\begin{table}[htp]%
\caption{$\DLliteR$ input and deduction rules}
\hrule\mbox{}\\
\textbf{$\DLliteR$ input translation $I_{dlr}(S)$}\\[.7ex]
\scalebox{.9}{
\small
$\begin{array}[t]{l@{\ \ }l}               
\mbox{(idlr-nom)} 
& a \in \NI \mapsto \{\nom(a)\}\\
\mbox{(idlr-cls)} 
& A \in \NC \mapsto \{\cls(A)\}\\
\mbox{(idlr-rol)} 
& R \in \NR \mapsto \{\rol(R)\}\\[1ex]

\mbox{(idlr-inst)} 
& A(a) \mapsto \{\insta(a,A)\} \\
\mbox{(idlr-inst2)} 
& \non A(a) \mapsto \{\non\insta(a,A)\} \\
\mbox{(idlr-triple)} 
& R(a,b) \mapsto \{\triplea(a,R,b)\} \\
\mbox{(irl-ntriple)} 
& \non R(a,b) \mapsto \{\non\triplea(a,R,b)\} \\
\end{array}$
\;
$\begin{array}[t]{l@{\ \ }l}               

\mbox{(idlr-subc)} 
& A \subs B \mapsto \{\subClass(A,B)\}\\
\mbox{(idlr-supnot)} 
& A \subs \non B \mapsto \{\supNot(A,B)\} \\
\mbox{(idlr-subex)} 
& \exists R \subs B \mapsto \{\subEx(R,B)\} \\

\mbox{(idlr-supex)} 
&  A \subs \exists R \mapsto \{\supEx(A,R,aux^{\alpha})\}\\[1ex]
        
\mbox{(idlr-subr)} 
& R \subs S \mapsto \{\subRole(R,S)\}\\
\mbox{(idlr-dis)} & \mathrm{Dis}(R,S)  \mapsto \{\pDis(R,S)\}\\
\mbox{(idlr-inv)} & \mathrm{Inv}(R,S) \mapsto \{\pInv(R,S)\}\\
\mbox{(idlr-irr)} & \mathrm{Irr}(R) \mapsto \{\pIrr(R)\}\\
\end{array}$}\\ 
\textbf{$\DLliteR$ deduction rules $P_{dlr}$}\\[.7ex]
\scalebox{.85}{
\small
$\begin{array}{l@{\;}r@{\ }r@{\ }l@{}}
	 \mbox{(pdlr-instd)} & \instd(x,z) & \rif & \insta(x,z).\\
	 \mbox{(pdlr-tripled)} & \tripled(x,r,y) & \rif & \triplea(x,r,y).\\[.5ex]

   
		
   \mbox{(pdlr-subc)} 
   &      \instd(x,z) & \rif & \subClass(y,z), \instd(x,y). \\
   \mbox{(pdlr-supnot}) 
   & \non\instd(x,z) & \rif & \supNot(y,z), \instd(x,y). \\
   \mbox{(pdlr-subex)} 
   & \instd(x,z) & \rif & \subEx(v,z), \tripled(x,v,x'). \\
   \mbox{(pdlr-supex)} 
   & \tripled(x,r,x') & \rif & \supEx(y,r,x'), \instd(x,y).\\[.5ex]


   \mbox{(pdlr-subr)} 
   & \tripled(x,w,x') & \rif & \subRole(v,w), \tripled(x,v,x'). \\
   \mbox{(pdlr-dis1)} & \non\tripled(x,u,y), & \rif & \pDis(u,v), \tripled(x,v,y).\\	
	 \mbox{(pdlr-dis2)} & \non\tripled(x,v,y), & \rif & \pDis(u,v), \tripled(x,u,y).\\	
   \mbox{(pdlr-inv1)} & \tripled(y,v,x) & \rif & \pInv(u,v), \tripled(x,u,y). \\
   \mbox{(pdlr-inv2)} & \tripled(y,u,x) & \rif & \pInv(u,v), \tripled(x,v,y). \\
   \mbox{(pdlr-irr)} & \non\tripled(x,u,x) & \rif & \pIrr(u), \const(x).\\[.5ex]

   \mbox{(pdlr-ninstd)} & \non\instd(x,z) & \rif & \non\insta(x,z).\\
   \mbox{(pdlr-ntripled)} & \non\tripled(x,r,y) & \rif & \non\triplea(x,r,y).\\[.5ex]

   \mbox{(pdlr-nsubc)} 
   &    \non\instd(x,y) & \rif & \subClass(y,z), \non\instd(x,z). \\
   
   \mbox{(pdlr-nsupnot}) 
   & \non\instd(x,y) & \rif & \supNot(y,z), \non\instd(x,z).\\
                          
   \mbox{(pdlr-nsubex)} 
   & \non\tripled(x,v,x') & \rif & \subEx(v,z), \const(x'), \non\instd(x,z).\\

   \mbox{(pdlr-nsupex)} 
	  & \non\instd(x,y) & \rif & \supEx(y,r,w), \const(x),\\ 
		                         &&& \AllNRel(x,r).\\[.5ex]	

   \mbox{(pdlr-nsubr)} 
   & \non\tripled(x,v,x') & \rif & \subRole(v,w), \non\tripled(x,w,x'). \\	 

   \mbox{(pdlr-ninv1)} & \non\tripled(y,v,x) & \rif & \pInv(u,v), \non\tripled(x,u,y). \\
   \mbox{(pdlr-ninv2)} & \non\tripled(y,u,x) & \rif & \pInv(u,v), \non\tripled(x,v,y). \\[1ex]	
	
		\mbox{(pdlr-allnrel1)} & \AllNRelStep(x,r,y) & \rif & \first(y), \non \tripled(x,r,y).\\
	  \mbox{(pdlr-allnrel2)} & \AllNRelStep(x,r,y) & \rif & \AllNRelStep(x,r,y'), \nextp(y',y),
		                                               \non \tripled(x,r,y).\\
	  \mbox{(pdlr-allnrel3)} & \AllNRel(x,r) & \rif & \lastp(y), \AllNRelStep(x,r,y).\\[1ex]	
		
  \end{array}$}\\[.5ex]
\hrule\mbox{}
\label{tab:dlr-rules-tgl}
\end{table}

\begin{table}[p]%
\caption{Output translation $O(\alpha)$}
\label{tab:output-rules-tgl} 
\hrule\mbox{}\\[1ex] 
\small\centering
$\begin{array}{l@{\ \ }l}
\mbox{(o-concept)} & A(a) \mapsto \{A(a)\} \\
\mbox{(o-role)} & R(a,b) \mapsto \{R(a,b)\} \\[1ex]
\end{array}$ 
\hrule
\end{table}

\smallskip\noindent
\emph{$\DLliteR$ deduction rules:}
rules in $P_{dlr}$ add deduction rules for ABox reasoning.
These rules are provided in Table~\ref{tab:dlr-rules-tgl}.
In the case of existential axioms,
the rule (pdlr-supex) introduces a new relation to the auxiliary individual as follows:
\begin{center}\small
   $\tripled(x,r,x')  \rif  \supEx(y,r,x'), \instd(x,y).$
\end{center}
In this translation the reasoning on negative information
is directly encoded by ``contrapositive'' versions of the rules.
For example, with respect to previous rule, we have:
\begin{center}\small
	$\non\instd(x,y) \rif \supEx(y,r,w), \const(x), \AllNRel(x,r).$
\end{center}
where $\AllNRel(x,r)$ verifies that $\non \triple(x,r,y)$ holds for all $\const(y)$
by an iteration over all constants.

\smallskip\noindent
\emph{Defeasible axioms input translations}: 
the set of input rules $I_\default$ 
(shown in Table~\ref{tab:input-default-tgl}) 
provides the translation of defeasible axioms $\default(\alpha)$ in the DKB: in other words,
they are used to specify that the axiom $\alpha$ need to be considered as
defeasible. 
For example, $\default(A \isa \exists R)$ is translated to 
$\defsupex(A, R, aux^{\alpha})$.
  
\begin{table}[p]%
\caption{Input rules $I_{\default}(S)$ for defeasible axioms}
\label{tab:input-default-tgl}


\hrule\mbox{}\\[1ex]
\small
\scalebox{.82}{
$\begin{array}{@{}l@{~}r@{~}l@{}}
 \mbox{(id-inst)} & \default(A(a))  & \mapsto \{\, \definst(A,a).\,\} \\
 \mbox{(id-triple)} & \default(R(a,b))  & \mapsto \{\, \deftriple(R,a,b).\,\} \\ 
 \mbox{(id-ninst)} & \default(\non A(a))  & \mapsto \{\, \defninst(A,a).\,\} \\
 \mbox{(id-ntriple)} & \default(\non R(a,b))  & \mapsto \{\, \defntriple(R,a,b).\,\} \\[1ex] 

 \mbox{(id-subc)} & \default(A \subs B)  & \mapsto \{\, \defsubs(A,B).\,\} \\  
 \mbox{(id-supnot)} & \default(A \subs \non B)  & \mapsto \{\, \defsupnot(A, B).\,\} \\[1ex]
\end{array}$
$\begin{array}{l@{~}r@{~}l}
 \mbox{(id-subex)} & \default(\exists R \subs B)  & \mapsto \{\, \defsubex(R, B).\,\} \\    
 \mbox{(id-supex)} & \default(A \subs \exists R)  & \mapsto \{\, \defsupex(A, R, aux^{\alpha}).\,\} \\[1ex]

 \mbox{(id-subr)} & \default(R \subs S)  & \mapsto \{\, \defsubr(R, S).\,\} \\    
 \mbox{(id-dis)} & \default(\mathrm{Dis}(R,S))  & \mapsto \{\, \defdis(R, S).\,\} \\   
 \mbox{(id-inv)} & \default(\mathrm{Inv}(R,S))  & \mapsto \{\, \definv(R,S).\,\} \\   
 \mbox{(id-irr)} & \default(\mathrm{Irr}(R))  & \mapsto \{\, \defirr(R).\,\}\\[1ex]
\end{array}$
} \hrule
\end{table}
	
\smallskip\noindent
\emph{Overriding rules:}
rules for defeasible axioms provide the different
conditions for the correct interpretation of defeasibility: the overriding rules
define conditions (corresponding to clashing sets) for 
recognizing an exceptional instance.
For example, for axioms of the form $\default(A \isa \exists R)$,
the translation introduces the rule:
\begin{center}\small
	$\ovr(\supEx,x,y,r,w) \rif \defsupex(y,r,w), \instd(x,y), \AllNRel(x,r).$
\end{center}
Note that in this version of the calculus, 
the reasoning on 
negative information (of the clashing sets) is directly encoded in the deduction
rules.
Overriding rules in $P_\default$, shown in Table~\ref{tab:ovr-rules-tgl}.

\begin{table}[p]%
\caption{Deduction rules $P_{\default}$ for defeasible axioms: overriding rules}
\label{tab:ovr-rules-tgl} 


\hrule\mbox{}\\[1ex]
\scalebox{.95}{
$\begin{array}{l@{\ \ }r@{\ \ }l}
 \mbox{(ovr-inst)} &
 \ovr(\insta,x,y)  \rif & \definst(x,y), \non \instd(x,y). \\
 \mbox{(ovr-triple)} &
 \ovr(\triplea,x,r,y) \rif & \deftriple(x,r,y), \non\tripled(x,r,y). \\[0.5ex]
 \mbox{(ovr-ninst)} &
 \ovr(\ninsta,x,y)  \rif & \defninst(x,y), \instd(x,y). \\
 \mbox{(ovr-ntriple)} &
 \ovr(\ntriplea,x,r,y) \rif & \defntriple(x,r,y), \tripled(x,r,y). \\[0.5ex]
 \mbox{(ovr-subc)} &
 \ovr(\subClass,x,y,z) \rif &
 \defsubs(y,z), \instd(x,y), \non \instd(x,z). \\
  \mbox{(ovr-supnot)} & 
  \ovr(\supNot,x,y,z) \rif &  \defsupnot(y,z), \instd(x,y), \instd(x,z).\\
													
  \mbox{(ovr-subex)} & 
  \ovr(\subEx,x,r,z) \rif &  
	\defsubex(r,z), \tripled(x,r,w), \non \instd(x,z).\\
  \mbox{(ovr-supex)} & 
  \ovr(\supEx,x,y,r,w) \rif & 
	\defsupex(y,r,w), \instd(x,y),\\
	&& \AllNRel(x,r).\\[0.5ex]

  \mbox{(ovr-subr)}  &
  \ovr(\subRole,x,y,r,s) \rif &
  \defsubr(r,s), \tripled(x,r,y),
  \non \tripled(x,s,y).\\
  \mbox{(ovr-dis)}  & 
  \ovr(\pDis,x,y,r,s) \rif &
  \defdis(r,s), \tripled(x,r,y), \tripled(x,s,y).\\
  \mbox{(ovr-inv1)}  & 
	\ovr(\pInv,x,y,r,s) \rif &
  \definv(r,s), \tripled(x,r,y),
	\non \tripled(y,s,x).\\	
  \mbox{(ovr-inv2)} & 
	\ovr(\pInv,x,y,r,s) \rif &
  \definv(r,s), \tripled(y,s,x), 
	\non \tripled(x,r,y).\\
  \mbox{(ovr-irr)}  & 
  \ovr(\pIrr,x,r,c) \rif &
  \defirr(r), \tripled(x,r,x).\\[1.5ex]
\end{array}$}
\hrule
\vspace{-2ex}
\end{table}

\smallskip\noindent
\emph{Defeasible application rules:}
another set of rules in $P_\default$ defines the defeasible application of such axioms:
intuitively, defeasible axioms are applied only to instances that have
not been recognized as exceptional.
For example, the rule (app-supex) applies a defeasible
existential axiom $\default(A \isa \exists R)$:
\begin{center}\small
  $\tripled(x,r,x') \rif \defsupex(y,r,x'), \instd(x,y),
                        \naf \ovr(\supEx,x,y,r,x').$
\end{center}
Defeasible application rules are provided in Table~\ref{tab:inheritance-rules-tgl}.

\begin{table}[p]%

\bigskip

\caption{Deduction rules $P_{\default}$ for defeasible axioms: application rules}
\label{tab:inheritance-rules-tgl} 

\hrule\mbox{}\\[1ex]
\scalebox{.9}{
\small
$\begin{array}{l@{\;}r@{\ }r@{\ }l@{}}
   \mbox{(app-inst)} 
   &   \instd(x,z) & \rif & \definst(x,z), \naf \ovr(\insta,x,z).\\
   \mbox{(app-triple)} 
   &   \tripled(x,r,y) & \rif & \deftriple(x,r,y), \naf \ovr(\triplea,x,r,y).\\[0.5ex]
   \mbox{(app-subc)} 
   &   \instd(x,z) & \rif 
	 & \defsubs(y,z), \instd(x,y), \naf \ovr(\subClass,x,y,z).\\
   \mbox{(app-supnot)} 
   &  \non\instd(x,z) & \rif 
	 & \defsupnot(y,z), \instd(x,y), \naf \ovr(\supNot,x,y,z).\\
													
   \mbox{(app-subex)} 
   & \instd(x,z) & \rif & \defsubex(v,z), \tripled(x,v,x'), \naf \ovr(\subEx,x,v,z). \\
   \mbox{(app-supex)}
   & \tripled(x,r,x') & \rif & \defsupex(y,r,x'), \instd(x,y),
                        \naf \ovr(\supEx,x,y,r,x'). \\[0.5ex]
													

   \mbox{(app-subr)} 
   & \tripled(x,w,x') & \rif & \defsubr(v,w), \tripled(x,v,x'), \naf \ovr(\subRole,x,y,v,w). \\
\mbox{(app-dis1)} 
   & \non\tripled(x,v,y) & \rif & \defdis(u,v), \tripled(x,u,y), \naf \ovr(\pDis,x,y,u,v). \\
\mbox{(app-dis2)} 
   & \non\tripled(x,u,y) & \rif & \defdis(u,v), \tripled(x,v,y), \naf \ovr(\pDis,x,y,u,v). \\
   \mbox{(app-inv1)} 
   & \tripled(y,v,x) & \rif & \definv(u,v), \tripled(x,u,y), \naf \ovr(\pInv,x,y,u,v).\\
   \mbox{(app-inv2)} 
   & \tripled(x,u,y) & \rif & \definv(u,v), \tripled(y,v,x), \naf \ovr(\pInv,x,y,u,v).\\       

 \mbox{(app-irr)} 
   & \non\tripled(x,u,x) & \rif & \defirr(u), \const(x) \naf \ovr(\pIrr,x,u). \\
		
   \mbox{(app-ninst)} 
   &   \non\instd(x,z) & \rif & \defninst(x,z), \naf \ovr(\ninsta,x,z).\\
   \mbox{(app-ntriple)} 
   &   \non\tripled(x,r,y) & \rif & \defntriple(x,r,y), \naf \ovr(\ntriplea,x,r,y).\\[0.5ex]

   \mbox{(app-nsubc)} 
   & \non\instd(x,y) & \rif 
	 & \defsubs(y,z), \non\instd(x,z), \naf \ovr(\subClass,x,y,z).\\

   \mbox{(app-nsupnot)} 
   &  \non\instd(x,y) & \rif 
	 & \defsupnot(y,z), \instd(x,z), \naf \ovr(\supNot,x,y,z).\\
													
   \mbox{(app-nsubex)} 
   & \non\tripled(x,v,x') & \rif & \defsubex(v,z), \const(x'), \non\instd(x,z), \naf \ovr(\subEx,x,v,z). \\
   \mbox{(app-nsupex)}
   & \non\instd(x,y) & \rif & \defsupex(y,r,x'), \const(x), \AllNRel(x,r),\\
                     &&&  \naf \ovr(\supEx,x,y,r,x').\\[0.5ex]												
												
   \mbox{(app-nsubr)} 
   & \non\tripled(x,v,y) & \rif & \defsubr(v,w), \non\tripled(x,w,y), \naf \ovr(\subRole,x,y,v,w). \\
   \mbox{(app-ninv1)} 
   & \non\tripled(y,v,x) & \rif & \definv(u,v), \non\tripled(x,u,y), \naf \ovr(\pInv,x,y,u,v).\\
   \mbox{(app-ninv2)} 
   & \non\tripled(x,u,y) & \rif & \definv(u,v), \non\tripled(y,v,x), \naf \ovr(\pInv,x,y,u,v).\\[1.5ex]	
	
	\end{array}$}
\hrule
\end{table}


\smallskip\noindent
\textbf{Translation process.}
Given a DKB $\K$ in $\DLliteR$ normal form, 
a program $PK(\K)$ that encodes query answering for $\K$ is obtained as:

\smallskip

\centerline{$PK(\K) = P_{dlr} \cup P_{\default} \cup 
     I_{dlr}(\K) \cup I_{\default}(\K)$}

\smallskip

\noindent 
Moreover, $PK(\K)$ is completed with a set of supporting facts
about constants:
for every literal $\nom(c)$, $\supEx(a,r,c)$ or $\defsupex(a,r,c)$ in $PK(\K)$,
$\const(c)$ is added to $PK(\K)$. Then, given an arbitrary enumeration $c_0, \dots, c_n$ 
s.t. each $\const(c_i) \in PK(\K)$, the facts $\first(c_0), \lastp(c_n)$
and $\nextp(c_i, c_{i+1})$ with 
$0 \leq i < n$ 
are added to $PK(\K)$.
Query answering $\K \models \alpha$ is then obtained 
by testing whether the (instance) query, translated to
datalog by $O(\alpha)$, is a consequence of $PK(\K)$, 
i.e., whether $PK(\K) \models O(\alpha)$ holds. 


\smallskip\noindent
\textbf{Correctness.}
The presented translation procedure provides
a sound and complete materialization calculus
for instance checking on $\DLliteR$ DKBs in normal form.

As in~\cite{BozzatoES:18}, 
the proof for this result can be verified by
establishing a correspondence between minimal justified models of $\K$
and answer sets of $PK(\K)$.
Besides the simpler structure of the final program,
the proof is simplified by the direct formulation of rules for 
negative reasoning.
Another new aspect of the proof in the case of $\DLliteR$
resides in the management of existential axioms,
since there is the need to define a correspondence between the auxiliary individuals
in the translation and the interpretation of existential axioms in the semantics:
we follow the approach of Kr\"{o}tzsch in~\cite{Krotzsch:10}, where,
in building the correspondence with justified models, auxiliary constants $aux^\alpha$ are mapped to
the class of Skolem individuals for existential axiom $\alpha$.
\comment{As in~\cite{BozzatoES:18}, in our translation
we consider UNA and \emph{named models}, i.e. interpretations restricted to $sk(N_\K)$, where $N_\K$
are the individuals that appear in the input $\K$.}
Thus we can show the correctness result on Herbrand models, that will be denoted $\hat{\I}(\casmap)$.

Let $\I_\CAS = \stru{\I, \casmap}$ be a justified named CAS-model. We define the set of overriding assumptions:
$$\OVR(\I_\CAS) = \{\, \ovr(p(\ee)) \;|\; \stru{\alpha, \ee} \in \casmap,\, I_{dlr}(\alpha) = p \,\}$$
Given a CAS-interpretation $\I_{\CAS}$, 
we can define a corresponding Herbrand interpretation 
$I(\I_{\CAS})$ for $PK(\K)$
by including the following atoms in it:
\begin{enumerate}[label=(\arabic*).]
\itemsep=0pt
\item 
all facts of $PK(\K)$;
\item 
 $\instd(a,A)$, if $\I \models A(a)$ and
 $\non\instd(a,A)$, if $\I \models \non A(a)$; 
\item 
  $\tripled(a,R,b)$, if  $\I \models R(a,b)$ and
  $\non\tripled(a,R,b)$, if  $\I \models \non R(a,b)$;	
\item 
  $\tripled(a,R,aux^\alpha)$, if $\I \models \exists R(a)$
	for $\alpha = A \isa \exists R$;
\item
	$\AllNRel(a,R)$
	if $\I \models \non \exists R(a)$;
\item
  each $\ovr$-literal from $\OVR(\I_{\CAS})$;
\end{enumerate}
%
%
The next proposition shows that the least Herbrand model of $\K$
can be represented by the answer sets of the program $PK(\K)$.

\begin{proposition}
\label{prop:correctness}
Let  $\K$ be a DKB in $\DLliteR$ normal form. Then:
%
	\begin{enumerate}[label=(\roman*).]
	\item 
	  for every (named) justified clashing assumption $\casmap$, 
		the interpretation $S = I(\hat{\I}(\casmap))$ is an answer set of $PK(\K)$;
	\item
	   every answer set $S$ of $PK(\K)$ is of the form 
		$S = I(\hat{\I}(\casmap))$ where $\casmap$ is a (named) justified clashing assumption for $\K$.
	\end{enumerate}
\end{proposition}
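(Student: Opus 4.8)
The argument follows the scheme of the materialization-calculus correctness proofs in~\cite{Krotzsch:10,BozzatoES:18}: we exhibit a bijection between the (named, minimal Herbrand) justified CAS-models of $\K$ and the answer sets of $PK(\K)$, treating items~(i) and~(ii) as the two directions. The backbone is an \emph{auxiliary lemma for the non-defeasible core}: for a DL knowledge base $\K_0$ on $\Lcal_\Sigma$ (no defeasible axioms) in normal form, the program $P_{dlr}\cup I_{dlr}(\K_0)$ together with the $\const$/$\first$/$\nextp$/$\lastp$ facts has a least model --- which exists iff the Horn theory $\phi_{\K_0}$ of Lemma~\ref{lem:horn-equiv}, extended with the Skolemized existential axioms, is consistent --- and in that case $\instd(a,A)$ (resp.\ $\tripled(a,R,b)$, $\non\instd(a,A)$, $\non\tripled(a,R,b)$, $\AllNRel(a,R)$) belongs to it iff $A(a)$ (resp.\ $R(a,b)$, $\non A(a)$, $\non R(a,b)$, $\non\exists R(a)$) is entailed by $\K_0$ over the named domain $sk(N_{\K_0})$, with $aux^\alpha$ interpreted as the class of Skolem $R$-successors introduced by $\alpha=A\isa\exists R$. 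The positive part is Kr\"otzsch's calculus~\cite{Krotzsch:10} restricted to $\DLliteR$ normal form; the part specific to our setting is that, since $\phi_{\K_0}$ is Horn, its \emph{negative} consequences are themselves obtained by a least fixpoint of the contrapositive rules (pdlr-ninstd)--(pdlr-ninv2) together with (pdlr-allnrel1)--(pdlr-allnrel3) --- this is exactly the ``justification safeness'' of $\DLliteR$ --- and must be proved, in both directions, by induction on derivation length, using consistency to discharge clashes. The same lemma, applied to $\phi_\K^{\mathrm{strict}}$ augmented with the ground instances $\phi_\alpha(\ee)$ of the defeasible axioms at the non-exceptional tuples, is reused below.

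\textbf{Item (i).}
Let $\casmap$ be a named justified clashing assumption, $\hat{\I}(\casmap)$ the associated minimal named Herbrand CAS-model, and $S = I(\hat{\I}(\casmap))$. Since the application rules are the only ones carrying a NAF literal $\naf\ovr(\cdots)$, the reduct $G_S(PK(\K))$ is obtained by deleting exactly the ground application-rule instances whose tuple is exceptional (lies in $\casmap$, by construction of $\OVR(\hat{\I}(\casmap))\subseteq S$) and stripping $\naf\ovr$ from the rest. I would then check: \emph{(a) $S$ is a model of $G_S(PK(\K))$} --- for $\instd$/$\tripled$/$\non\instd$/$\non\tripled$ atoms this follows from condition~(i) of Definition~\ref{def:cas-model} for the strict axioms and condition~(ii) for the surviving instances of defeasible axioms, read through the auxiliary lemma; for the $\ovr$-atoms in $S$, each $\stru{\alpha,\ee}\in\casmap$ is justified, so its clashing set holds in every $\NI$-congruent CAS-model with assumption $\casmap$, hence in $\hat{\I}(\casmap)$, so the body of the matching overriding rule holds in $S$; conversely, if the body of an overriding rule held in $S$ at a \emph{non}-exceptional tuple, then reading the body semantically and combining it with condition~(ii) of Definition~\ref{def:cas-model} would force $\hat{\I}(\casmap)$ to satisfy two complementary ground literals, contradicting its consistency --- so no overriding rule is violated. \emph{(b) $S$ is minimal for $G_S(PK(\K))$} --- each $\instd$/$\tripled$/$\non\instd$/$\non\tripled$/$\AllNRel$ atom of $S$ is derived because $\hat{\I}(\casmap)$ is, by construction, the least fixpoint of the strict deduction rules together with the application rules restricted to non-exceptional tuples (auxiliary lemma again), and each $\ovr$-atom of $S$ is derived by its overriding rule from the clashing-set literals, which lie in $S$ by~(a). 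Hence $S$ is an answer set of $PK(\K)$.

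\textbf{Item (ii).}
Conversely, let $S$ be an answer set of $PK(\K)$. Set $\casmap = \{\,\stru{\alpha,\ee}\mid\ovr(p(\ee))\in S,\ I_{dlr}(\alpha)=p\,\}$ and let $\I$ be the named Herbrand interpretation read off the $\instd$/$\tripled$ atoms of $S$, with $aux^\alpha$ interpreted as a Skolem successor class as in~\cite{Krotzsch:10}; $\I$ is well defined and consistent because an answer set contains no complementary literals, and the analogue of \cite[Prop.~5, context focus]{BozzatoES:18} (no positive property is provable for non-named elements in absence of reflexivity) guarantees that $\casmap$ ranges over named individuals only, so it is a set of clashing assumptions in the sense of the semantics. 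Then: $\stru{\I,\casmap}$ is a CAS-model of $\K$ --- condition~(i) because $S$ satisfies $P_{dlr}\cup I_{dlr}(\K)$, condition~(ii) because for every non-exceptional tuple the guard $\naf\ovr(\cdots)$ of the matching application rule is true in $S$, which satisfies that rule; and $\casmap$ is justified --- for $\stru{\alpha,\ee}\in\casmap$ the atom $\ovr(p(\ee))\in S$ must, by minimality of $S$ for its reduct, be produced by its overriding rule, whose body is a clashing set $S_{\stru{\alpha,\ee}}$ true in $S$, and every $\NI$-congruent CAS-model $\I'$ of $\K$ with assumption $\casmap$ is a model of $\phi_\K^{\mathrm{strict}}$ plus the applied instances of the defeasible axioms, hence of all their consequences, which by the auxiliary lemma are exactly recorded in $S$, so $\I'\models S_{\stru{\alpha,\ee}}$. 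Thus $\stru{\I,\casmap}$ is a justified CAS-model; minimality of $\I$ (auxiliary lemma once more) identifies it with $\hat{\I}(\casmap)$ and gives $S = I(\hat{\I}(\casmap))$.

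\textbf{Main obstacle.}
The crux --- and the genuinely new point with respect to~\cite{BozzatoES:18} --- is the handling of right-hand existential axioms: one must make the single abstract constant $aux^\alpha$ faithfully stand for the whole class of anonymous Skolem $R$-successors of $\alpha$, so that $\tripled(a,R,aux^\alpha)$ mirrors $\exists R(a)$ and, dually, $\AllNRel(a,R)$ --- obtained by the finite iteration (pdlr-allnrel1)--(pdlr-allnrel3) over the named constants --- mirrors $\non\exists R(a)$ over $sk(N_\K)$; checking that this correspondence commutes with the deduction rules, with (app-supex)/(app-nsupex) and (ovr-supex), and with the definition of clashing set is the delicate part of the auxiliary lemma. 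A secondary obstacle is to actually prove the ``justification safeness'' of $\DLliteR$ underlying that lemma, i.e.\ that negative consequences never require disjunctive case analysis; this follows from the Horn form of Lemma~\ref{lem:horn-equiv}, but needs a careful fixpoint argument rather than the proof-by-contradiction / ``test environment'' machinery of~\cite{BozzatoES:18}.
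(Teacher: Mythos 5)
Your proposal takes essentially the same route as the paper's own (sketched) proof: item (i) via the reduct $G_S(PK(\K))$, showing $I(\hat{\I}(\casmap))$ is an answer set for a justified $\casmap$, and item (ii) by reading a justified CAS-model off an answer set with the auxiliary constants $aux^\alpha$ handled \`a la Kr\"otzsch and the overriding rules matched to clashing sets. Your auxiliary lemma on the non-defeasible core and the explicit fixpoint treatment of negative literals simply spell out steps the paper asserts by deferring to \cite{BozzatoES:18}, so this is the same approach, elaborated in more detail.
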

\begin{proof}[Sketch]
	We consider $S = I(\hat{\I}(\casmap))$ built as above and 
	reason over the reduct 
	$G_S(PK(\K))$ of $PK(\K)$ with respect $S$: 
	basically, $G_S(PK(\K))$ contains all ground rules from $PK(\K)$
	that are not falsified by some NAF literal in $S$,
	in particular it excludes application rules for the axiom instances
	that are recognized as overridden.
	
	Item (i) can then be proved by showing that given a
	justified $\chi$, $S$ is an answer set for 
	$G_S(PK(\K))$ (and thus $PK(\K)$):	
	the proof follows the same reasoning of the one in~\cite{BozzatoES:18},
	where 
	the fact that $I(\hat{\I}(\casmap))$ satisfies rules
	of the form (pdlr-supex) in $PK(\K)$
	is verified by the condition on existential formulas
	in the construction of the model above.
		
	For item (ii), we can show that from any answer set $S$
	we can build a justified model $\I_S$ for $\Kcal$
	such that $S = I(\hat{\I}(\casmap))$ holds.
	The model can be defined similarly to the original
	proof, but we need to consider auxiliary individuals in the
	domain of $\I_S$, that is thus defined as:
	$\Delta^{\I_S} = \{ c \;|\; c \in \NIs \}
	\cup \{aux^\alpha \;|\; \alpha = A \isa \exists R \in \K \}$.
	The result can then be proved by considering the 
	effect of deduction rules for existential axioms in $G_S(PK(\K))$:
	auxiliary individuals provide the domain elements in $\I_S$
	needed to verify this kind of axioms.
	The justification of the model
	follows by verifying that the new formulation of 
	overriding rules correctly encode the 
	possible clashing sets for the input defeasible axioms.
\end{proof}

\noindent
The correctness result directly follows from 
Proposition~\ref{prop:correctness}.
\begin{theorem}
\label{thm:encode-c-entailment}
Let  $\K$ be a DKB in $\DLliteR$ normal form, and let 
$\alpha \in \Lcal_\Sigma$ such that $O(\alpha)$ is defined. 
Then $\K \models \alpha$ iff $PK(\K) \models O(\alpha)$.
\end{theorem}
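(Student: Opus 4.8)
The plan is to reduce the theorem directly to Proposition~\ref{prop:correctness}, so almost all the work has already been done; what remains is to connect DKB-entailment of an ABox assertion $\alpha$ to the datalog consequence $PK(\K)\models O(\alpha)$ via the correspondence between justified CAS-models and answer sets. First I would recall the definitions: $\K\models\alpha$ means that $\I\models\alpha$ for every DKB-model $\I$ of $\K$, i.e.\ for every $\I$ admitting some justified CAS-model $\I_\CAS=\stru{\I,\casmap}$; and $PK(\K)\models O(\alpha)$ means $S\models O(\alpha)$ for every answer set $S$ of $PK(\K)$. Since $O(\alpha)$ is assumed defined, $\alpha$ is of the form $A(a)$ or $R(a,b)$, and $O(\alpha)$ is the single ground fact $A(a)$ resp.\ $R(a,b)$ (as a datalog literal). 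So the statement $S\models O(\alpha)$ amounts to $A(a)\in S$ resp.\ $R(a,b)\in S$.

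Next I would establish the bridge lemma relating membership of an output literal in an answer set to semantic satisfaction in the associated model. Concretely: for a justified named CAS-model $\I_\CAS=\stru{\hat\I(\casmap),\casmap}$ and the corresponding answer set $S=I(\hat\I(\casmap))$ from Proposition~\ref{prop:correctness}(i), I claim $\hat\I(\casmap)\models A(a)$ iff $A(a)\in S$, and similarly for role assertions. The ``only if'' direction is immediate from clauses (2)--(3) of the construction of $I(\I_\CAS)$, which put $\instd(a,A)$ into $S$ exactly when $\I\models A(a)$; and the output fact $A(a)$ is derived by an output rule from $\instd(a,A)$ — here I should note that the excerpt's output rules in Table~\ref{tab:output-rules-tgl} are written $A(a)\mapsto\{A(a)\}$, so I would treat $O(\alpha)$ as the query fact that the deduction machinery must produce, i.e.\ the relevant output rule is the implicit $A(a)\rif\instd(a,A)$ (and $R(a,b)\rif\tripled(a,R,b)$); the main text's phrasing ``translated to datalog by $O(\alpha)$'' supports reading $O(\alpha)$ this way. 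The ``if'' direction uses the fact, from the proof of Proposition~\ref{prop:correctness}(ii), that every answer set has the shape $I(\hat\I(\casmap))$, so $A(a)\in S$ forces $\instd(a,A)\in S$, which by the same construction means $\hat\I(\casmap)\models A(a)$. A subtle point to handle carefully: in part (ii) the domain of $\I_S$ contains the auxiliary constants $aux^\alpha$, but since $\alpha$ (the query) is an ABox assertion over named individuals in $\NIs$, no $aux$-constant occurs in $O(\alpha)$, so this does not interfere; I would remark on this explicitly.

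With the bridge lemma in hand the theorem is a short chain of quantifier manipulations. For the forward direction, assume $\K\models\alpha$; take any answer set $S$ of $PK(\K)$; by Proposition~\ref{prop:correctness}(ii), $S=I(\hat\I(\casmap))$ for a justified clashing assumption $\casmap$, so $\hat\I(\casmap)$ is a DKB-model of $\K$, hence $\hat\I(\casmap)\models\alpha$, hence by the bridge lemma $S\models O(\alpha)$; as $S$ was arbitrary, $PK(\K)\models O(\alpha)$. For the converse, assume $PK(\K)\models O(\alpha)$; take any DKB-model $\I$ of $\K$; by definition it has a justified CAS-model, and — invoking the non-redundancy/minimality remark after Definition~\ref{def:cas-model} together with the named-model reduction noted before Proposition~\ref{prop:correctness} — we may assume it is of the form $\hat\I(\casmap)$ for a named justified $\casmap$; by Proposition~\ref{prop:correctness}(i), $S=I(\hat\I(\casmap))$ is an answer set of $PK(\K)$, so $S\models O(\alpha)$, so by the bridge lemma $\hat\I(\casmap)\models\alpha$, i.e.\ $\I\models\alpha$; as $\I$ was arbitrary, $\K\models\alpha$.

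The main obstacle I anticipate is not in this quantifier bookkeeping but in justifying the reduction to \emph{named} models cleanly: I must argue that entailment of an ABox assertion over named individuals is insensitive to the restriction from arbitrary DKB-models to named ones (and to the UNA), so that quantifying over the $\hat\I(\casmap)$ suffices. This is exactly the kind of argument made in~\cite{BozzatoES:18}, and in the $\DLliteR$ setting it rests on Lemma~\ref{lem:horn-equiv} (Horn-ness guarantees a least named model can be obtained by Skolemization) plus the observation, recorded after Definition~\ref{def:cas-model}, that clashing assumptions in DKB-models only ever involve named individuals; I would cite these and spell out the one-line consequence rather than reprove it. A secondary, purely presentational care point is making sure the treatment of $O(\alpha)$ as a derivable query literal is stated precisely enough that ``$PK(\K)\models O(\alpha)$'' is unambiguous.
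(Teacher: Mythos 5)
Your proposal is correct and follows the paper's own route: the paper derives Theorem~\ref{thm:encode-c-entailment} directly from Proposition~\ref{prop:correctness}, and your argument simply spells out the bridge between output facts and $\instd$/$\tripled$ atoms, the named-model reduction, and the quantifier bookkeeping that the paper leaves implicit.
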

%
We note that by further normalization of the DKB, the translation
can be slimmed at the cost of new symbols. E.g., existential
restrictions $\exists R$ can be named ($A_{\exists R} \equiv \exists R$)
and replaced throughout by $A_{\exists R}$; however, we refrain here from further discussion.


\section{Complexity of Reasoning Problems}
\label{sec:complexity}

We first consider the satisfiability problem, i.e., deciding whether a
given $\DLliteR$ DKB has some DKB-model. As it turns out, defeasible
axioms do not increase the complexity with respect to satisfiability
of $\DLliteR$, due to the following property. Let $ind(\K)$ denote
the set of individuals occurring in $\K$.

\begin{proposition}
\label{prop:DKB-existence}
Let $\DKB$ be a $\DLliteR$ DKB, and let $\casmap_0 = \{
\stru{\alpha,\ee} \mid D(\alpha) \in \DKB$, $\ee$ is over $ind(\DKB)$ \}
be the clashing assumption that makes an exception to every
defeasible axiom over the individuals occurring in $\DKB$.
Then $\DKB$ has some DKB-model iff
$\DKB$ has some CAS-model $\I_{\CAS} = \stru{\I, \casmap_0}$.
\end{proposition}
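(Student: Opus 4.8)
The plan is to prove the two directions of the equivalence separately, with the nontrivial content lying in the "only if" direction. First, for the "if" direction: suppose $\DKB$ has a CAS-model $\I_{\CAS} = \stru{\I, \casmap_0}$. I would note that $\casmap_0$ consists of clashing assumptions over individuals of $\DKB$, and the point is that $\casmap_0$ is "maximally cautious" — it makes an exception to every defeasible axiom on every named tuple, so condition (ii) of Definition~\ref{def:cas-model} becomes vacuous for all these tuples. I then need a justified CAS-model. The idea is to shrink $\casmap_0$ to a subset $\casmap \subseteq \casmap_0$ of clashing assumptions that are in fact justified; concretely, take a $\subseteq$-minimal $\casmap$ such that $\DKB$ still has an $\NI$-congruent CAS-model $\stru{\I', \casmap}$ (such a minimal one exists since the set of $\casmap' \subseteq \casmap_0$ admitting an $\NI$-congruent CAS-model is nonempty, containing $\casmap_0$ itself, and finite after restricting attention to individuals of $\DKB$, using the focus property cited from \cite[Prop.~5]{BozzatoES:18}). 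For each $\stru{\alpha,\ee} \in \casmap$ I must exhibit a clashing set: here I would invoke minimality — removing $\stru{\alpha,\ee}$ from $\casmap$ destroys the existence of an $\NI$-congruent CAS-model, which (by the Horn-equivalence of Lemma~\ref{lem:horn-equiv} and hence the existence of a least model relative to the fixed interpretation of named individuals) forces $\phi_\alpha(\ee)$ to be entailed together with the rest, so the negation of $\phi_\alpha(\ee)$, suitably expressed as a finite set of (negated) ABox assertions true in every $\NI$-congruent model, is the required clashing set. This yields a justified CAS-model, hence a DKB-model.

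For the "only if" direction: suppose $\DKB$ has a DKB-model, witnessed by a justified CAS-model $\stru{\I, \casmap}$. By the focus property \cite[Prop.~5]{BozzatoES:18} (valid here because we excluded reflexivity axioms, so no positive property can be derived for unnamed elements, and clashing sets only mention positive assertions on the clashing tuple together with negated ones), every $\stru{\alpha,\ee} \in \casmap$ has $\ee$ over $ind(\DKB)$, hence $\casmap \subseteq \casmap_0$. The task is then to enlarge $\casmap$ to $\casmap_0$ and still have a CAS-model. I would construct $\I'$ from $\I$ so that $\stru{\I', \casmap_0}$ is a CAS-model: condition (i) (satisfaction of all strict axioms) is inherited from $\I$, and condition (ii) for $\casmap_0$ is vacuous on named tuples, so the only thing to check is that the strict part of $\DKB$ together with the applications of defeasible axioms that are \emph{not} overridden under $\casmap_0$ is satisfiable — but under $\casmap_0$ there are no such non-overridden applications on named tuples, so we just need a model of the strict part. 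Since $\I$ is already a model of the strict part of $\DKB$, taking $\I' = \I$ works directly, giving the desired CAS-model $\stru{\I, \casmap_0}$.

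The main obstacle I anticipate is the careful handling of the "if" direction: turning the minimality of $\casmap$ into an explicit clashing set requires the argument that, relative to a fixed interpretation of $\NI$, the Horn character of the FO-translation (Lemma~\ref{lem:horn-equiv}) gives a well-defined least model, so that failure of the removed clashing assumption's instance corresponds to genuine entailment of a finite conjunction of assertions contradicting $\alpha(\ee)$ — and one must check that this conjunction can be written purely with ABox and negated-ABox assertions as required by the definition of clashing set (which is where the restricted syntax of $\DLliteR$ right-hand sides, in particular negated concept/role assertions, is used). A secondary subtlety is confirming that unnamed (Skolem/auxiliary) individuals introduced by existential axioms never appear in any clashing assumption, so that the restriction of $\casmap$ to tuples over $ind(\DKB)$ in $\casmap_0$ is genuinely exhaustive; this again rests on the absence of reflexivity and on \cite[Prop.~5, Prop.~6]{BozzatoES:18}.
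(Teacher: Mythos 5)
Your proposal is correct and follows essentially the same route as the paper: the easy direction uses the focus property ($\casmap \subseteq \casmap_0$, absent reflexivity) plus the fact that adding exceptions never destroys CAS-modelhood (so $\stru{\I,\casmap_0}$ with the same $\I$ works), and the hard direction shrinks $\casmap_0$ to a set of individually non-removable clashing assumptions, whose justification follows because every $\NI$-congruent CAS-model must then violate $\phi_\alpha(\ee)$, yielding the clashing set. The paper phrases the shrinking as a one-pass greedy removal rather than taking a $\subseteq$-minimal subset, but this is the same idea; your extra detail on expressing $\neg\phi_\alpha(\ee)$ as ABox and negated-ABox assertions in $\DLliteR$ is a welcome elaboration of what the paper leaves implicit.
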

Informally, the only if direction holds because any DKB-model of
$\DKB$ is also a CAS-model of $\DKB$; as justified exceptions are
only on $ind(\DKB)$, and making more exceptions does not destroy
CAS-modelhood, some CAS-model with clashing assumptions $\casmap_0$
exists. Conversely, if $\DKB$ has some CAS-model of the form
$\I_{\CAS} = \stru{\I, \casmap_0}$, a justified CAS-model can be
obtained by setting $\casmap = \casmap_0$ and trying to remove, one by
one, each clashing assumption $\stru{\alpha,\ee}$ from $\casmap$; this
is possible, if $\DKB$ has some NI-congruent model
$\stru{\I',\casmap\setminus\{ \stru{\alpha,\ee}\}}$. After looping
through all clashing assumptions in $\casmap_0$, we have that some 
some NI-congruent model $\stru{\I', \casmap}$ exists that is
justified. 

Thus, DKB-satisfiability testing boils down to CAS-satisfiability
checking, which can be done using the datalog encoding described in
the previous section. From the particular form of that encoding, we
obtain the following result.

\begin{theorem}
\label{theo:DKB-sat}
Deciding whether a given $\DLliteR$ DKB $\DKB$ has some DKB-model is
$\nlogspace$-complete in combined complexity and FO-rewritable 
in data complexity.
\end{theorem}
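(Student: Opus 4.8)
The plan is to reduce DKB-satisfiability to consistency checking for (essentially) a plain $\DLliteR$ knowledge base, and then to read off both bounds from the known complexity of $\DLliteR$ reasoning~\cite{CalvaneseGLLR07}. By Proposition~\ref{prop:DKB-existence}, $\DKB$ has a DKB-model iff it has a CAS-model $\stru{\I,\casmap_0}$, where $\casmap_0$ makes an exception to every defeasible axiom over every individual in $ind(\DKB)$; note that $\casmap_0$ is polynomially (indeed logspace) computable from $\DKB$. Since $\casmap_0$ is now \emph{fixed}, no NAF/guessing is needed: I would work with the NAF-free datalog program $P^{\mathrm{sat}}(\DKB)$ obtained from $P_{dlr}\cup I_{dlr}(\DKB)$ by adding, for each $\default(\alpha)$ in $\DKB$, the corresponding application rule of $P_\default$ with its $\naf\,\ovr(\cdot)$ guard replaced by the definite side condition ``$\ee$ is not over $ind(\DKB)$''. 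This is a Horn program with a least model $M$, and $\DKB$ has a CAS-model with $\casmap_0$ iff $M$ contains no clash, i.e.\ no complementary pair $\instd(a,A),\non\instd(a,A)$ or $\tripled(a,R,b),\non\tripled(a,R,b)$ (disjointness and irreflexivity violations already surface as such clashes through the contrapositive rules). Correctness of this equivalence is obtained by instantiating the soundness/completeness argument of Proposition~\ref{prop:correctness} to the trivial clashing assumption $\casmap_0$: with $\casmap_0$ fixed, the defeasible axioms behave as ordinary $\DLliteR$ inclusions that are simply suppressed on the finitely many named individuals, so $P^{\mathrm{sat}}(\DKB)$ has the same combinatorial core as the datalog encoding of a plain $\DLliteR$ knowledge base.

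For the combined-complexity bound I would argue that deciding the presence of a clash in $M$ reduces, exactly as for $\DLliteR$ KB consistency, to a reachability problem over a graph of size polynomial in $\DKB$: only finitely many assertions are relevant (those about named individuals and about the auxiliary successors $aux^\alpha$ reached along existential axioms), and our encoding is engineered so that the rules for negative information stay within this class -- in particular, the iteration performed by the $\AllNRel$/$\AllNRelStep$ rules is a bounded linear scan over the fixed enumeration of the constants and adds no essential recursion. This yields membership in $\nlogspace$. The matching lower bound is immediate, since a plain $\DLliteR$ knowledge base is the special case of a DKB with no defeasible axioms, so $\DLliteR$ KB (un)satisfiability -- already $\nlogspace$-hard -- reduces trivially to DKB-satisfiability. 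For the data-complexity claim, with the TBox, the RBox, and the set of defeasible axioms fixed and only the ABox varying, $\casmap_0$ depends on the input solely through the ABox individuals, and the side condition ``$\ee$ is not over $ind(\DKB)$'' is expressible by a first-order (in)equality condition over the input database. Hence the standard union-of-conjunctive-queries rewriting for $\DLliteR$ consistency extends: the defeasible axioms contribute additional rewriting steps that are applied only on the anonymous part of the (virtual) model, and the resulting first-order query, evaluated over the ABox, decides unsatisfiability of $\DKB$ -- i.e.\ FO-rewritability.

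The step I expect to be the main obstacle is the interaction of existential axioms with the relativized defeasible axioms on the \emph{unnamed} domain elements: since a Skolem successor $aux^\alpha$ is not in $ind(\DKB)$, it remains subject to every defeasible axiom, so one must verify that chaining through such successors (and their further defeasible consequences) is faithfully captured by $P^{\mathrm{sat}}(\DKB)$ and, crucially, generates only polynomially many relevant assertions, so that neither the $\nlogspace$ bound nor FO-rewritability is broken. Here I would rely on the same tree-shaped, bounded-branching structure of the $\DLliteR$ completion that underlies the classical results, together with the context-focus property noted after Definition~\ref{def:cas-model} (no positive fact about a genuinely fresh element beyond these successors is ever derivable). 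Once this is checked, the two bounds follow as described and the theorem is established.
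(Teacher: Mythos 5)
Your proposal follows essentially the same route as the paper: reduce DKB-satisfiability to CAS-satisfiability for the maximal clashing assumption $\casmap_0$ via Proposition~\ref{prop:DKB-existence}, decide the latter with the (now NAF-free) datalog encoding, obtain the $\nlogspace$ upper bound by viewing derivability as graph reachability (the paper makes this precise by observing that $PK(\DKB)$ is in effect a linear datalog program of bounded predicate arity), inherit $\nlogspace$-hardness from plain $\DLliteR$ satisfiability, and get FO-rewritability in data complexity by adapting the reformulation algorithm of Calvanese et al.\ so that defeasible axioms are only unfolded for unnamed individuals (the paper's i/u typing adornment, your ``anonymous part'' restriction). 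It is correct and matches the paper's argument in structure and level of detail.
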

To see this, the program $PK(\DKB)$ for $\DKB$ has in each rule at most one
literal with an intentional predicate in the body, i.e., a predicate
that is defined by proper rules. Thus, we have a linear datalog
program with bounded predicate arity, for which derivability of an atom
is feasible in nondeterministic logspace, as this can be reduced to 
a graph reachability problem in logarithmic space. The \nlogspace-hardness is inherited from the combined complexity of KB satisfiability
$\DLliteR$, which is $\nlogspace$-complete.

As regards data-complexity, it is
well-known that instance checking and similarly satisfiability testing for $\DLliteR$
are FO-rewritable \cite{CalvaneseGLLR07};
this has been shown by a reformulation algorithm, which informally
unfolds the axioms $\alpha(\vec{x})$ (i.e., performs resolution
viewing axioms as clauses), such that deriving an instance $A(a)$
reduces to presence of certain assertions in the ABox. This unfolding
can be adorned by typing each argument $x\,{\in}\,\vec{x}$ of an axiom to whether it is
an individual from the DKB  (type i), or an unnamed individual (type u); for
example, $\alpha(x) = A \isa B$ yields $\alpha_{\rm i}(x)$ and
$\alpha_{\rm u}(x)$. The typing carries over to unfolded
axioms.  In unfolding, one omits typed versions of defeasible axioms
$D(\alpha(\vec{x}))$, which w.l.o.g.\
have no existential restrictions; e.g., for $D(\alpha(x)) = D(B \isa
C)$, one omits $\alpha_{\rm i}(x)$. In this way, instance derivation 
(and similarly satisfiability testing) is reduced to
presence of certain ABox assertions again.

On the other hand, entailment checking is intractable: while some
justified model is constructible in polynomial time, there can be 
exponentially many clashing assumptions for such models, even under
UNA; finding a DKB model that violates an axiom turns out to be difficult.

\begin{theorem}
\label{theo:DKB-entail-conp}
Given a DKB $\DKB$  and an axiom $\alpha$,  deciding
$\DKB\models\alpha$ is \conp-complete; this holds also for data
complexity and instance checking, i.e., $\alpha$ is of the form $A(a)$
for some assertion $A(a)$.
\end{theorem}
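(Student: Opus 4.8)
The plan is to establish membership in \conp{} and \conp-hardness separately, with the hardness argument being the delicate part. For \emph{membership}, I would observe that $\DKB\not\models\alpha$ holds iff there is some DKB-model $\I$ of $\DKB$ with $\I\not\dlmodels\alpha$; equivalently, by Proposition~\ref{prop:correctness}, iff there is some answer set $S$ of $PK(\DKB)$ with $O(\alpha)\notin S$. An answer set $S$ corresponds to a justified clashing assumption $\casmap$, and by the remarks following Definition~\ref{def:cas-model} (context focus, \cite[Prop.~5]{BozzatoES:18}) every such $\casmap$ ranges only over individuals occurring in $\DKB$, hence $|\casmap|$ is polynomial in $|\DKB|$. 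So the nondeterministic algorithm guesses $\casmap$ (polynomial size), then verifies in polynomial time that (a) $\casmap$ yields a CAS-model, (b) each $\stru{\alpha,\ee}\in\casmap$ is justified — which amounts to polynomially many $\DLliteR$ (in)consistency checks, each feasible in \ptime{} (indeed \nlogspace) by Theorem~\ref{theo:DKB-sat} and \cite{CalvaneseGLLR07} — and (c) the induced Herbrand model does not contain $O(\alpha)$. Since the complement is in \np, the problem is in \conp; this uniform argument also covers the data-complexity/instance-checking case, as fixing the TBox/RBox only shrinks the work.

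For \emph{hardness}, the goal is a reduction from a \conp-complete problem (I would use \textsc{Validity} of a propositional DNF, equivalently non-satisfiability of a CNF $\varphi=\bigwedge_j c_j$ over variables $x_1,\dots,x_n$) to an instance-checking query, already at the data level. The idea is to make the space of justified clashing assumptions encode the $2^n$ truth assignments: for each variable $x_i$ introduce a defeasible axiom whose overriding on a fixed individual $a$ is \emph{always} justifiable (via a small clashing set built from auxiliary assertions in the ABox that is satisfiable together with everything else), so that in each justified model the choice ``override'' vs.\ ``don't override'' is free and independent across $i$ — thereby picking a truth value $t(x_i)$. The defeasible axioms, when \emph{not} overridden, propagate the chosen literals into concept memberships of $a$; then non-defeasible GCIs encode the clauses $c_j$ so that $a$ becomes a member of a distinguished concept $\mathsf{Sat}$ exactly when the induced assignment satisfies $\varphi$. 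Finally one arranges that the query atom $A(a)$ (say $\mathsf{AllSat}(a)$) is entailed — i.e.\ holds in \emph{every} justified DKB-model — iff \emph{every} assignment satisfies $\varphi$, i.e.\ $\varphi$ is valid.

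The \textbf{main obstacle} is getting the clashing-set mechanics right so that the $n$ variable-choices are genuinely free and mutually independent: a clashing assumption is justified only if its clashing set $S$ is entailed (holds in \emph{all} NI-congruent CAS-models with the same $\casmap$), so the gadget for $x_i$ must guarantee, robustly, that the negative assertion witnessing the exception is derivable no matter which other exceptions are taken, while the positive contribution of a \emph{non}-overridden axiom must not secretly force or forbid other overridings. Concretely, I expect to need a separate ``shadow'' individual or fresh role per variable to localize the justification, plus care that $\DLliteR$'s $\non\exists R$ right-hand sides (the only negative construct available) suffice to build the clashing sets — here the normal form of Table~\ref{tab:dlr-normalform} and the $\AllNRel$ machinery of Table~\ref{tab:dlr-rules-tgl} are what make the negative reasoning go through. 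Once the gadget is shown to realize all $2^n$ assignments as justified models and none others, correctness of the reduction is immediate; keeping the TBox fixed and putting the clause structure into the ABox (one block of assertions per clause) yields the data-complexity and instance-checking strengthening. I would also double-check that \conp-hardness is not already forced by a simpler argument (e.g.\ a direct reduction reusing the exponential blow-up in the number of clashing assumptions mentioned just before the theorem), and use whichever reduction is cleanest.
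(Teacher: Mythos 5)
Your \conp-membership argument is essentially the paper's: guess a (polynomial-size, named) clashing assumption together with witnessing clashing sets and verify everything in polynomial time. (One caveat: the theorem speaks of an arbitrary axiom $\alpha$, while $O(\alpha)$ and Theorem~\ref{thm:encode-c-entailment} cover only assertions; for a general $\alpha$ the countermodel may violate $\alpha$ on unnamed elements, which is why the paper's sketch guesses over $sk(N)$ for a linearly larger set $N$ of fresh names rather than relying on the datalog output atoms.)

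The hardness part, however, has a genuine gap, and it is exactly at the point you flag as the ``main obstacle''. Your gadget assumes you can create, per variable $x_i$, a defeasible axiom whose overriding on $a$ is \emph{always justifiable but optional}, so that ``override'' vs.\ ``don't override'' is a free binary choice. The justified-exception semantics does not permit this: a clashing assumption $\stru{\alpha,\ee}$ is justified only if some clashing set $S$ holds in \emph{every} NI-congruent CAS-model with the same $\casmap$. If $S$ is built from (strict) ABox material that is entailed regardless, then $\alpha(\ee)$ is inconsistent with entailed facts and must be overridden in \emph{every} justified model (no choice); if $S$ is not so entailed, the override is simply not justified (again no choice). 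Moreover, non-redundancy of justified $\casmap$'s (cf.\ \cite[Prop.~6]{BozzatoES:18}) rules out gratuitous exceptions. Free binary choices can only arise from \emph{mutually conflicting} defeasible information (Nixon-diamond-style pairs such as $D(X_i(a))$, $D(\bar X_i(a))$ with $X_i \isa \non \bar X_i$), so your construction would have to be redesigned around such conflict gadgets. The data-complexity strengthening is a second gap: with a fixed TBox you cannot have one concept/defeasible axiom per variable, so ``clause structure in the ABox'' forces you to encode variables as individuals and clauses via roles --- note the paper's own remark that without roles (or with roles but no existentials) data complexity becomes tractable, so this is not a detail. The paper avoids both problems by reducing from AR-entailment for inconsistency-tolerant $\DLliteR$ \cite{DBLP:conf/rr/LemboLRRS10}: making every ABox assertion defeasible makes maximal repairs correspond exactly to justified clashing assumptions, and Lembo et al.'s \conp-hardness (already for data complexity, with the defeasible assertions then emulated by $D(A'\isa A)$ plus $A'(a)$) transfers directly. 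I recommend either adopting that reduction or, if you want a self-contained propositional reduction, the circumscription-style encoding sketched after the theorem in the paper, which uses conflict with strict GCIs rather than ``freely'' overridable axioms --- but note that route only gives combined complexity, not data complexity.
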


\begin{proof}[Sketch]
In order to refute $\DKB\models \alpha$,
\comment{we can exhibit that}
a justified CAS-model  $\IC_{\CAS} = \stru{\I,\casmap}$ of $\DKB$ 
named relative to $sk(N)$ exists
such that $\I \not\models \alpha$, 
with $N_\DKB \subseteq N \subseteq \NI \setminus \NI_S$
\comment{and where $N$ is of small (linear) size and
includes fresh individual names such that 
$\I$ violates the instance of $\alpha$ for some elements $\ee$ over $sk(N)$.}
We can guess clashing assumptions $\casmap$ over $N$, 
where each $\stru{\alpha,\ee} \in \casmap$ has a unique clashing set
$S_{\alpha(\ee)}$, and a partial interpretation
over $N$, and check derivability of all $S_{\alpha(\ee)}$ and
that the interpretation extends to a model of $\DKB$ relative to
$sk(N)$ in polynomial time. Thus, we overall obtain
membership of entailment in \conp. 

The \conp-hardness can be shown by a reduction from 
inconsistency-tolerant reasoning from $\DLliteR$
KBs under AR-semantics \cite{DBLP:conf/rr/LemboLRRS10}. Given 
a $\DLliteR$ KB $\Kcal = \Acal\cup \Tcal$ with ABox $\Acal$ and TBox
$\Tcal$,
a repair is a maximal subset $\Acal'\subseteq \Acal$ such
that $\Kcal' = \Acal'\cup\Tcal$ is satisfiable; an assertion $\alpha$
is AR-entailed by $\Kcal$, if $\Kcal'\models \alpha$ for every repair
$\Kcal'$ of $\Kcal$. As shown by Lembo et al.{},
deciding AR-entailment is \conp-hard; this continues to hold under UNA and if all
assertions involve only concept resp.\ role names.

Let $\hat{\DKB} \,{=}\, \Tcal \,{\cup}\, \{ D(\alpha) \mid \alpha \,{\in}\, \Acal\}$, i.e., 
all assertions from $\Kcal$ are defeasible. 
As easily seen, the
maximal repairs $\Acal'$ correspond to the justified clashing assumptions
by $\casmap = \{ \stru{\alpha,\ee}\mid$ $ \alpha(\ee) \in
\Acal\setminus\Acal'\}$. Thus, $\Kcal$ AR-entails $\alpha$ iff
$\hat{\DKB} \models \alpha$,
proving \conp-hardness. 

To show the result for data complexity, if we do not allow for
defeasible assertions, we can adjust the transformation, where
we emulate $D(A(a))$ by an axiom $D(A' \isa A)$ and make the assertion
$A'(a)$, where $A'$ is a fresh concept name; similarly $D(R(a,b))$ is
emulated by $D(R' \isa R)$ plus $R'(a,b)$, where $R'$ is a fresh role
name. As Lembo et al.{} proved \conp-hardness 
under data-complexity, the claimed result follows.
\end{proof}
\noindent
We observe that the \conp-hardness proof in \cite{DBLP:conf/rr/LemboLRRS10} 
used many role restrictions and inverse roles; for 
combined complexity, \conp-hardness of entailment 
in absence of any role names can be derived from  results about propositional circumscription in 
\cite{DBLP:journals/jcss/CadoliL94}. In particular,
\cite[Theorem 16]{DBLP:journals/jcss/CadoliL94} showed that deciding whether 
an atom $z$ is a circumscriptive consequence of 
a positive propositional 2CNF $F$ if all variables except $z$ are
minimized (i.e., in circumscription notation $\mathit{CIRC}(F;P,\emptyset;\{z\})\models z$),
is \conp-hard;%
\footnote{The models of $\mathit{CIRC}(F;P,\emptyset;\{z\})$ are all
  models $M$ of $F$ such that no model $M'$ of $F$ exists with
  $M'\setminus \{z\} \subset M\setminus \{z\}$.} such an inference can
be easily emulated by entailment from a DKB constructed from $F$ and
$z$, where propositional variables are used as concept names.

Indeed, for each clause $c = x \lor y$ in $F$, we add to $\DKB$ an
axiom $x \isa \non y$ if $z\neq x,y$ and an axiom $x \isa z$
(resp.\ $y\isa z$) if $z\,{=}\,y$ (resp.\ $x\,{=}\,z$).  Furthermore, for each
variable $x\neq z$, we add 
$D(x(a))$, where $a$
is a fixed individual. This effects that justified DKB-models of
$\DKB$ correspond to the models of
$\mathit{CIRC}(F;P,\emptyset;\{z\})$, where the minimality of exceptions
in justified DKB-models emulates the minimality of circumscription
models; thus, $\DKB \models  z(a)$ iff
$\mathit{CIRC}(F;P,\emptyset;\{z\})\models z$. Similarly as
above, defeasible assertions could be moved to 
defeasible axioms $D(c \isa v)$ with a single assertion $c(a)$.

While this establishes \conp-hardness of entailment for combined
complexity under UNA when roles are absent, the data complexity is
tractable; this is because we can
consider the axioms for individuals $a$ separately, and 
if the GCI axioms are fixed only few
axioms per individual exist. This is similar if role axioms but no
existential restrictions are
permitted, as we can concentrate on
the pairs $a,b$ and $b,a$ of individuals.
The questions remains how much of the latter is possible while staying tractable.

\section{Discussion and Conclusion}

\noindent
\textbf{Related works.}
The relation of the justified exception approach to nonmonotonic
description logics was discussed in \cite{BozzatoES:18}, where in
particular an in-depth comparison w.r.t.\
typicality in DLs \cite{GiordanoGOP:13},
normality \cite{DBLP:journals/jair/BonattiFS11} and
overriding \cite{BonattiFPS:15} was given. 
A distinctive feature of our approach, linked to the 
interpretation of exception candidates as different clashing assumptions,
is the possibility to ``reason by cases'' inside the alternative
justified models (cf.\ the discussion of the classic Nixon Diamond example
\cite[Section 7.4]{BozzatoES:18}).
%
%
%
The introduction of non-monotonic features in the $\DLlite$ family
and, more in general, to low complexity DLs has been 
the subject of many works, mostly with the goal of 
preserving the low complexity properties of the base logic in the extension.
For example, in~\cite{DBLP:journals/jair/BonattiFS11} a study of the 
complexity of reasoning with circumscription in $\DLliteR$ and
$\cal{EL}$ was presented.
Similarly, in~\cite{GiordanoGOP:11} the authors studied the application of their 
typicality approach to $\DLlite_c$ and $\cal{EL}^\bot$.
A recent work in this direction
is \cite{DBLP:journals/ijar/PenselT18}, where a defeasible version of $\cal{EL}^\bot$
was obtained by modelling higher typicality
by extending classical canonical
models in $\cal{EL}^\bot$ with 
multiple representatives of concepts and individuals.

\smallskip\noindent
\textbf{Summary and future directions.}
In this paper, we applied the justified exception approach from~\cite{BozzatoES:18} 
to reason on $\DLliteR$ knowledge bases with defeasible axioms.
We have shown that the limited language of $\DLliteR$ allows us to formulate a
direct datalog translation to reason on derivations for negative information
in instance checking.

The form of $\DLliteR$ axioms enables us to concentrate 
on exceptions in absence of reflexivity over the
individuals known from the KB: however, we are interested in studying
the case of languages allowing exceptions on unnamed individuals
(generated by existential axioms) by providing them with a suitable
semantic characterization. In particular, if reflexivity axioms are
allowed, positive properties are provable for unnamed individuals
(i.e., standard names). To account for this, multiple auxiliary
elements $aux^\alpha$ may be necessary to enable different exceptions
for unnamed individuals reached from different individuals; this
remains for further investigation.

Moreover, we plan to apply the current results on $\DLliteR$ in the framework of Contextualized
Knowledge Repositories with hierarchies as in~\cite{DBLP:conf/kr/BozzatoSE18}.

\nocite{BozzatoES:18}
\bibliographystyle{splncs04}
\bibliography{bibliography}


\end{document}